\title{Answering Related Questions}
\titlerunning{Answering Related Questions}
\author{\'{E}douard Bonnet}{CNRS, ENS de Lyon, Université Claude Bernard Lyon 1, LIP UMR 5668, 69342 Lyon, France \and \url{http://perso.ens-lyon.fr/edouard.bonnet}}{edouard.bonnet@ens-lyon.fr}{https://orcid.org/0000-0002-1653-5822}{}
\authorrunning{\'E. Bonnet}
\newtheorem*{rep@theorem}{\rep@title}
\newcommand{\newreptheorem}[2]{%
\newenvironment{rep#1}[1]{%
 \def\rep@title{#2 \ref{##1}}%
 \begin{rep@theorem}}%
 {\end{rep@theorem}}}
\newcommand{\defproblem}[3]{
  \vspace{1mm}
  \begin{tcolorbox}[
    colframe=black,        
    colback=white,         
    boxrule=0.5pt,         
    arc=4pt,               
    left=6pt, right=6pt,   
    top=6pt, bottom=6pt    
  ]    
    #1 \\
    {\bf{Input:}} #2 \\
    {\bf{Output:}} #3
  \end{tcolorbox}
  \vspace{1mm}
}
\newtheorem{question}{Question}
\newenvironment{proofofclaim}{\noindent \textsc{Proof:}}{\hfill$\Diamond$\medskip}
\crefname{claim}{Claim}{Claims}
\newcommand\dist{\mathsf{dist}\xspace}
\newcommand\diste{\mathsf{dist}_e\xspace}
\newcommand\distd{\mathsf{dist}_\Delta\xspace}
\newcommand\rel{\textsc{Sidestep}}
\newcommand\relp{\textsc{Sidestep-Pos}}
\newcommand\reln{\textsc{Sidestep-Neg}}
\newcommand\sol{\text{Sol}}
\newcommand\false{\mathsf{false}}
\newcommand\true{\mathsf{true}}
\newcommand\nil{\textsf{nil}}
\begin{document}

\maketitle

\begin{abstract}
  We introduce the meta-problem \rel$(\Pi, \dist, d)$ for a~problem $\Pi$, a~metric $\dist$ over its inputs, and a~map $d: \mathbb N \to \mathbb R_+ \cup \{\infty\}$.
  A~solution to \rel$(\Pi, \dist, d)$ on an input $I$ of $\Pi$ is a~pair $(J, \Pi(J))$ such that $\dist(I,J) \leqslant d(|I|)$ and $\Pi(J)$ is a~correct answer to $\Pi$ on input~$J$.
  This formalizes the notion of answering a~related question (or sidestepping the question), for which we give some motivations, and compare it with the adjacent concepts of smoothed analysis, certified algorithms, planted problems, modification problems, and approximation algorithms.
  Informally, we call hardness radius the ``largest'' $d$ such that \rel$(\Pi, \dist, d)$ is \NP-hard.
  This framework calls for establishing the hardness radius of problems~$\Pi$ of interest for the relevant distances $\dist$.

  We exemplify it with graph problems and two distances $\distd$ and $\diste$ (the edge edit distance) such that $\distd(G,H)$ (resp.~$\diste(G,H)$) is the maximum degree (resp.~number of edges) of the symmetric difference of $G$ and $H$ if these graphs are on the same vertex set, and $+\infty$ otherwise.
  Thus when solving \rel$(\Pi, \distd, d)$ (resp.~\rel$(\Pi, \diste, d)$) on an $n$-vertex input~$G$, acceptably close graphs $H$ are obtained by XORing $G$ with a~graph of maximum degree at~most~$d(n)$ (resp.~having at~most $d(n)$ edges).
  We show that the decision problems \textsc{Independent Set}, \textsc{Clique}, \textsc{Vertex Cover}, \textsc{Coloring}, \textsc{Clique Cover} have hardness radius $n^{\frac{1}{2}-o(1)}$ for $\distd$, and $n^{\frac{4}{3}-o(1)}$ for $\diste$, that \textsc{Hamiltonian Cycle} (or \textsc{Hamiltonian Path}) has hardness radius 0 for $\distd$, and somewhere between $n^{\frac{1}{2}-o(1)}$ and $n/3$ for $\diste$, and that \textsc{Dominating Set} has hardness radius $n^{1-o(1)}$ for $\diste$.
  We leave several open questions.
\end{abstract}

\section{Introduction}\label{sec:intro}

At the end of a~talk or in everyday life, we can be asked three kinds of well-formed questions: \emph{easy} ones, to which we can quickly provide an answer, \emph{hard} ones, which we cannot directly address but are still able to mention something related, and \emph{very hard} ones, for which even the latter is beyond us.
In this paper, we formalize the distinction between \emph{hard} and \emph{very hard} questions for computational problems.
To agree on the meaning of ``\emph{related}'' we need to specify a~metric on the input space of the problem at hand, as well as a~radius---possibly depending on the input size---within which an instance is deemed acceptably close to the initial question.
If the radius is the constant zero function, we are in fact solving the original problem.
At the other extreme, if the radius is infinite, then we shall simply find a~trivially solvable instance; an easy task for most problems.
It is then natural to determine how small this radius can be while keeping the meta-problem of ``sidestepping the question'' tractable. 
Informally, we call \emph{hardness radius} the supremum of the radii below which the problem is intractable. 

This paper serves as a~theoretical basis for \emph{answering related questions}, exemplified on central NP-complete graph problems, with the following motivations in mind.

\medskip

Efficiently solving the meta-problem for some task $\Pi$ (in the regime where the radius allows it) makes it possible to build a~customized dataset of graphs each labeled by an appropriate answer to~$\Pi$.
Indeed, if one wants the dataset on which $\Pi$ is solved to be some precise set $\mathcal D$ or to be sampled according to a~chosen distribution, one at least gets a~dataset $\mathcal D'$ that is pointwise close to the desired target.  
This can be useful to create benchmarks, curate instances for a~programming competition, or enable supervised learning for~$\Pi$.
Depending on the exact use case, we may ask for more properties from the instances of~$\mathcal D'$.
Ideally, solving $\Pi$ on the inputs of $\mathcal D'$ should remain as challenging as on~$\mathcal D$, or at least remain nontrivial, and as representative as possible of the original target distribution.
Regarding the last use case, while machine learning has become a~promising approach to combinatorial optimization~\cite{cappart2023gnnco}, supervised methods face a~clear bottleneck: producing labeled training instances requires solving possibly NP-hard problems.
Consequently, many recent works favor reinforcement learning~\cite{barrett2020eco} or unsupervised learning~\cite{wang2022principled,sun2022annealed,parkar2025localupdates}.
Applying supervised learning in this context relies on additional ad-hoc ingredients such as efficient problem-specific solvers \cite{SunY23}, leveraging sharp phase transitions and CSP solvers \cite{Lemos19}, or problem-dependent generation of hard instances~\cite{Lelarge25}.

The magnitude of the hardness radius may inform, at least for some relevant input metrics, about the degree of difficulty of a~problem.
For a~fixed distance between inputs of \NP-hard graph problems, this draws a~hierarchy of ``increasingly hard'' problems. 

\medskip

\textbf{\rel.}
Given a~problem~$\Pi$, a~metric $\dist$ over its inputs, and a~non-decreasing function $d: \mathbb N \to \mathbb R_+ \cup \{\infty\}$, we introduce the following problem.

\defproblem{\rel$(\Pi, \dist, d)$}{An input $I$ of $\Pi$.}{A~pair $(J,\Pi(J))$ with $\dist(I,J) \leqslant d(|I|)$.}

In the above, $|I|$ denotes the \emph{size} of input $I$, defined as a~natural number.
When the main part of $I$ is a~graph, $|I|$ denotes its number of vertices.
We further impose that the images of $\dist$ and $d$ can be computed in polynomial time in the size of their arguments.
Note that $\Pi$ can be a~\emph{decision} or a~\emph{function} problem.
And in both cases \rel$(\Pi, \dist, d)$ is itself a~function problem.
Thus $\Pi(J)$ lies in $\{\true, \false\}$ if~$\Pi$ is a~decision problem, and it is the correct solution to $\Pi$ on input $J$, or a special symbol $\nil$ indicating the absence of a~solution, when $\Pi$ is a~function problem.
Often, there may be several correct solutions.
We denote the set of all correct solutions to $\Pi$ on~$J$ by~$\sol_\Pi(J)$, and $\Pi(J)$ should be thought of as any fixed element of~$\sol_\Pi(J)$.
We make the slight abuse of notation of writing the expression $d(n)$ instead of the function $d$ in the third field of \rel.

In this paper, we use \emph{metric}, \emph{distance}, and \emph{distance function} as synonyms.
The maps we will consider $\diste$ and $\distd$ are indeed metrics.
However, the one statement where the distance functions are unspecified, \cref{thm:lpi}, does not require any metric axioms. 
In general, one can perfectly well define and study \rel\ with a~mere ``proximity measure.''

\medskip

\textbf{Considered metrics.}
We will exclusively deal with problems $\Pi$ whose input is an unweighted undirected graph, and optionally an~integer threshold, and consider two distances between graphs.
Both require that two graphs have the same vertex set for the distance between them to be finite.
For every two graphs $G, H$, $\diste(G,H) := \infty$ if $V(G) \neq V(H)$ and $$\diste(G,H) := |E(G) \triangle E(H)|,$$ otherwise, where~$\triangle$ denotes the symmetric difference.
For two graphs $G, H$ on the same vertex set, $V(G)$, we denote by $G \triangle H$ the graph with vertex set $V(G)$ and edge set $E(G) \triangle E(H)$.
We set $\distd(G,H) := \infty$ if $V(G) \neq V(H)$ and $$\distd(G,H) := \Delta(G \triangle H),$$ otherwise, where $\Delta(\cdot)$ denotes the maximum degree of its graph argument.
If the whole input consists of a~graph together with some additional integral, rational, or real parameters (typically serving as thresholds for a~graph property), a~finite distance between $(G,\overline k)$ and $(H,\overline \ell)$ implies that $\overline k = \overline \ell$.
Specifically, for every considered distance~$\dist$, $\dist((G,\overline k),(H,\overline \ell)) = \dist(G,H)$ if $\overline k = \overline \ell$, and $\dist((G,\overline k),(H,\overline \ell)) =\infty$ otherwise.
In particular, we will often exclude the parameters (when they exist at all) from the expressed distances.
 
In this paper, we chose to limit the distances to $\diste$ and $\distd$, which we believe are the two main distances over \emph{labeled} graphs.
Since we require a~polynomial-time algorithm to evaluate the distances, we cannot really afford a~metric invariant under isomorphism (because no polynomial algorithm is known for \textsc{Graph Isomorphism}); hence the \emph{labeled} vertex sets.
Then, $\diste$ essentially is the Hamming distance over the edge characteristic vector, and $\distd$ is the \emph{per-vertex} counterpart of $\diste$.
For example, we find natural and interesting the question of the minimum $d$ for which \rel$($\textsc{Ha\-mil\-tonian Cycle}$, \diste, d)$ is polynomial-time solvable, that is, in plain English: How many edge edits are necessary and sufficient to efficiently determine the Hamiltonicity of a~graph?
We find it appealing that the framework encompasses many questions of this kind that have not been explored in this precise form.

\medskip

\textbf{Hardness and tractability radii.}
A~function problem is said \emph{\NP-hard} if every problem in \NP~reduces to it via a~polynomial-time Turing reduction. 
The \emph{hardness radius} of a~problem $\Pi$ for (or \emph{under}) distance $\dist$ is at~least some non-negative, non-decreasing function $d$ if \rel$(\Pi, \dist, d)$ is~\NP-hard.
Similarly, the \emph{tractability radius} of~$\Pi$ for distance $\dist$ is at~most some non-negative, non-decreasing function $d$ if \rel$(\Pi, \dist, d)$ is~in~\FP, the polynomial-time class for function problems.
As the hardness and tractability radii are \emph{functions}, formalizing what they are \emph{equal to} through infima and suprema would require a~total order over maps. 
However, two non-decreasing functions $f, g: \mathbb N \to \mathbb R_+$ can be asymptotically incomparable, in the sense that for every $n \in \mathbb N$ there are $n_f, n_g \geqslant n$ such that $f(n_f) > g(n_f)$ and $f(n_g) < g(n_g)$.
Rather than defining the family of allowed maps, we will leave the definition as is: mere lower and upper bounds.

In some situations, though, we will be able to ``pinpoint'' more or less precisely the hardness or tractability radius.
We may say that $\Pi$ has for distance $\dist$ \emph{hardness radius}
\begin{compactitem}
\item $\gamma \in \mathbb R$ if \rel$(\Pi, \dist, \gamma)$ is \NP-hard, and for all $\varepsilon > 0$ such that $\gamma+\varepsilon$ is in the image of~$\dist$, \rel$(\Pi, \dist,   \gamma + \varepsilon)$ is in \FP,
  \item $n^{\gamma-o(1)}$ if \rel$(\Pi, \dist, n^\gamma)$ is in \FP, and for all $\varepsilon \in (0,\gamma]$, \rel$(\Pi, \dist, n^{\gamma-\varepsilon})$ is \NP-hard.
\end{compactitem}
Note that the hardness and tractability radii can in principle be separated by a~``zone'' of \NP-intermediate problems.
While our main distances, $\distd$ and $\diste$, only take integer values, and we could have given simplified definitions for them, the above supports the general case when the metric is valued in~$\mathbb R_+ \cup \{\infty\}$.

\subsection{Related concepts}

Here we detail how our meta-problem relates to and differs from well-established concepts.

\medskip

\textbf{Smoothed analysis.}
Introduced by Spielman and Teng to explain the practical effectiveness of the simplex algorithm despite its worst-case exponential runtime~\cite{SpielmanT04}, smoothed analysis has been successful in serving a~similar purpose for several algorithms operating over a~continuous input space, 
 such as the $k$-means method~\cite{Arthur11} or the 2-opt heuristic for the Euclidean \textsc{TSP}~\cite{Englert16}; also see~\cite{SpielmanT09}.
The (worst-case) complexity of~\rel$(\Pi, \dist, d)$ can be seen as an adversary's choosing the worst input $I$, and our choosing the most convenient input $I'$ at distance at~most~$d(|I|)$ from $I$, and then solving $\Pi$ on input~$I'$.
Smoothed analysis can be thought of as following the above process with the twist that $I'$ is instead chosen by Nature, according to some probability distribution. 
We refer the interested reader to \cite{SpielmanT04,SpielmanT09} for formal definitions.

\begin{figure}[h!]
\centering

\begin{subfigure}[t]{0.45\textwidth}
\centering
\begin{tikzpicture}[scale=1.2]
    \def\Xmin{-2.5}
    \def\Xmax{2.5}
    \def\Ymin{-2.5}
    \def\Ymax{2.5}

    \draw[rounded corners, thick, gray] (\Xmin,\Ymin) rectangle (\Xmax,\Ymax);
    \node[gray] at (\Xmin+1.3,\Ymax-0.3) {Input space of $\Pi$};
    
    \coordinate (I) at (-0.5,0.5);
    \def\Radius{1}
    
    \shade[inner color=blue!70, outer color=blue!10] (I) circle (\Radius cm);
    
    \node[fill=red,circle,inner sep=-0.07cm] (nI) at (I) {} ;
    \node[below] at ($(I)+(0,-0.08cm)$) {\textcolor{red}{$I$}};
    
    \foreach \angle/\radius in {20/0.6, 55/0.8, 90/0.5, 125/0.9, 160/0.4, 200/0.7, 235/0.6, 275/0.8, 320/0.5}
    {
        \coordinate (P) at ($(I)+(\angle:\radius cm)$);
        \filldraw[black] (P) circle (1pt);
    }
    
    \coordinate (Perturbed) at ($(I)+(55:0.8cm)$);
    \draw[->, thick, blue, decorate, decoration={snake, amplitude=1.5}] (nI) -- (Perturbed) node[midway, right] {};

    \node[below] at ($(I)+(0,-1.05cm)$) {perturbation support};

\end{tikzpicture}
\caption{Smoothed analysis.}
\end{subfigure}
\hfill
\begin{subfigure}[t]{0.45\textwidth}
\centering
\begin{tikzpicture}[scale=1.2]
    \def\Xmin{-2.5}
    \def\Xmax{2.5}
    \def\Ymin{-2.5}
    \def\Ymax{2.5}

    \draw[rounded corners, thick, gray] (\Xmin,\Ymin) rectangle (\Xmax,\Ymax);
    \node[gray] at (\Xmin+1.3,\Ymax-0.3) {Input space of $\Pi$};
    
    \coordinate (I) at (0.5,-0.5);
    \def\Radius{1}
    
    \filldraw[fill=green!10, draw=green!50] (I) circle (\Radius cm);
    
    \node[fill=red,circle,inner sep=-0.07cm] (nI) at (I) {} ;
    \node[above] at ($(I)+(0,0.04cm)$) {\textcolor{red}{$I$}};
    
    \foreach \angle/\radius in {15/0.7, 60/0.5, 100/0.8, 145/0.6, 185/0.9, 225/0.6, 265/0.8, 305/0.5}
    {
        \coordinate (P) at ($(I)+(\angle:\radius cm)$);
        \filldraw[black] (P) circle (1pt);
    }
    
    \coordinate (I') at ($(I)+(45:0.7cm)$);
    \filldraw[black] (I') circle (1pt) node[right] {\textcolor{green!60!black}{$I'$}};
    
    \draw[->, thick, green!60!black] (nI) -- (I') node[midway, above left] {};
    
    \coordinate (BoundaryPoint) at ($(I)+(250:1cm)$);
    \draw[<->, thick, black] (nI) -- (BoundaryPoint) node[midway, right] {$d$};
    
    \node[below] at ($(I)+(0,-1.05cm)$) {nearby instances};

\end{tikzpicture}
\caption{\rel$(\Pi, \dist, d)$.}
\end{subfigure}

\caption{Comparison of the smoothed analysis of $\Pi$ and \rel$(\Pi, \dist, d)$.
  The \emph{moves} of the adversary, Nature, and ours are depicted in red, blue, and green, respectively.}
\label{fig:smoothed-vs-sidestep}
\end{figure}
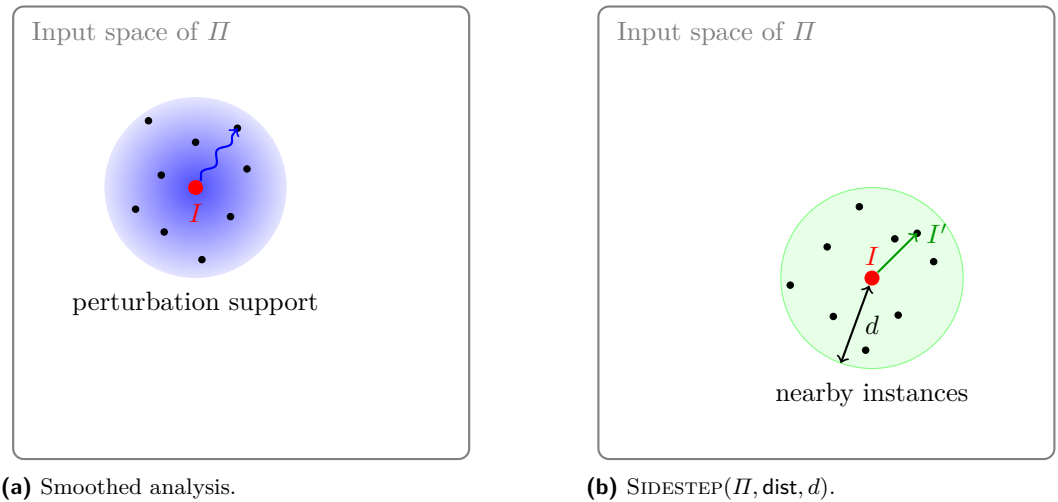

While the aim of smoothed analysis is to model real data, and thus to better describe the practical performance of an algorithm, the main motivation of solving \rel$(\Pi, \dist, d)$ lies elsewhere.
However, both endeavors are comparable in that they hierarchize \NP-hard problems by a~finer-grained complexity notion departing from worst-case analysis.
In fact, smoothed analysis also has its own ``hardness radius'' in the supremum of the variance for which the average complexity around the worst instance~$I$ is at~least superpolynomial.\footnote{And \emph{smoothed polynomial time} is reached when below the ``hardness radius'' the variance is low enough that the \emph{allowed time} exceeds that of the brute-force approach.}

  \medskip

  \textbf{Certified algorithms.}
  Perturbation stability~\cite{Bilu12} (also called \emph{instance stability}, \emph{Bilu--Linial stability}, or \emph{perturbation resilience}) defines \emph{$\gamma$-stable instances} of edge-weighted graph problems as those for which a~\emph{$\gamma$-perturbation}, i.e., an independent rescaling of the edge weights by a~multiplicative factor in $[1,\gamma]$, cannot change the optimum solution (only its cost).
  Under some slight adjustments, the NP-hard \textsc{Max Cut} problem can be solved in polynomial-time on $\gamma$-stable instances with sufficiently large $\gamma$~\cite{Bilu12}.
  The theory of perturbation stability has been exported to other graph problems (see for instance~\cite{Awasthi12}), and has given rise to the notion of \emph{certified algorithms} by Makarychev and Makarychev~\cite{Makarychev20}.

  For an edge-weighted graph problem $\Pi$, a~\emph{$\gamma$-certified algorithm} is one that, on any input~$I$, outputs an optimum solution to a~$\gamma$-perturbation of~$I$.
  This is a~particular case of our framework when we do not impose $\dist$ to be symmetric: \rel$(\Pi,\dist,\gamma)$ where $\dist(I,I')$ is equal to the largest ratio between an edge weight of $I'$ and the weight of the same edge in~$I$, and to $\infty $ if $I$ and $I'$ have not the same vertex and edge sets or if this ratio is smaller than~1.
  As our focus is on unweighted purely discrete problems, the results in this paper are disjoint from the literature on certified algorithms.  
  
\medskip

\textbf{Planted problems.}
Planted problems, and their main representative, the planted clique problem~\cite{Jerrum92}, ask to distinguish between a~(plain) random graph (usually $\mathcal G(n,1/2)$) and the ``same'' random graph augmented by some edges forming a~solution to an \NP-hard problem; be it by adding all the edges on a~selected vertex subset (\textsc{Clique}) or the edges of a~spanning cycle (\textsc{Hamiltonian Cycle}).
The search variant consists of actually reporting this planted solution.
In some way, \rel$(\Pi, \dist, d)$ is a~\emph{planting} problem instead.
It is about editing the inputs with features that make them---at~least to the \emph{editor}---easily identifiable as positive or as negative instances.



\medskip

\textbf{Edge modification problems.}
Edge modification problems ask, given a~graph $G$ and a~non-negative integer~$k$, whether at~most $k$~edges of $G$ can be edited (i.e., added or removed) for the resulting graph to belong to a~particular graph class~$\mathcal C$.
Most often, the parameterized complexity of these problems with respect to~$k$ is investigated.
There is a~recent survey on the topic~\cite{Crespelle23}.

Conceptually, this is close to \rel$(\Pi_{\mathcal C}, \diste, k)$, where $\Pi_{\mathcal C}$ is the recognition problem of class~$\mathcal C$.
Let us highlight the key differences.
The most obvious one is that, unlike in the edge modification problem to~$\mathcal C$, instances of~\rel$(\Pi_{\mathcal C}, \diste, k)$ can be addressed by creating a~negative instance.
Another important difference is that edge modification problems almost exclusively focus on classes~$\mathcal C$ that are easy to recognize, such as forests, interval graphs, or cographs.
Indeed, if the problem is already \NP-hard when $k=0$, there is not much to be done.
On the contrary, in \rel$(\Pi_{\mathcal C}, \diste, k)$, $k$ is part of the problem specification, \emph{not} the input, and we try and establish the smallest $k$ such that a~polynomial-time algorithm exists.
Therefore in our case, $\Pi_{\mathcal C}$ is intended to be hard.

\medskip

\textbf{Approximation algorithms.}
Some lower bounds in this paper are derived from known hardness-of-approximation results.
However, \rel\ is only loosely related to approximation.
Approximation keeps the input fixed and relaxes the output objective, whereas \rel\ allows changing the input but requires an exact answer on the modified instance.
Switching between yes- and no-instances also has no analogue in approximation algorithms.
We finally note that \cref{thm:ham-hardness-diste,thm:ds-hardness-diste} are \emph{not} obtained via inapproximability. 

\subsection{Our results}

In~\cref{sec:ham} we illustrate the new framework with the hamiltonicity problem. 
We first observe that \rel$(\textsc{Hamiltonian Cycle}, \distd, 1)$ is tractable, while \rel$(\textsc{Hamiltonian}$ $\textsc{Cycle}, \distd, 0)$ is not, since it coincides with \textsc{Hamiltonian Cycle}.  

\begin{theorem}\label{thm:ham-hr-distd}
  \textsc{Hamiltonian Cycle} has hardness radius 0 under~$\distd$. 
\end{theorem}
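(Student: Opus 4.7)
The hardness radius equaling $0$ unpacks to two claims: $\rel(\textsc{Hamiltonian Cycle}, \distd, 0)$ is \NP-hard, and $\rel(\textsc{Hamiltonian Cycle}, \distd, 1) \in \FP$. The first is immediate: the only graph at $\distd$-distance $0$ from $G$ is $G$ itself, so $\rel(\textsc{Hamiltonian Cycle}, \distd, 0)$ coincides with the classical, \NP-hard \textsc{Hamiltonian Cycle}.

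For the tractability at $d = 1$, the plan is: given $G$ on $n$ vertices, produce in polynomial time a matching $M$ on $V(G)$ such that the Hamiltonicity of $H := G \triangle M$ can be decided in polynomial time. For $n \le 2$, output $(G, \false)$. For $n \ge 3$, split on $\delta(G)$.

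If $\delta(G) \le 2$, pick $v$ with $\deg_G(v) \le 2$ and flip at most one edge incident to $v$ to achieve $\deg_H(v) = 1$: add an arbitrary edge at $v$ if $\deg_G(v) = 0$, take $M = \emptyset$ if $\deg_G(v) = 1$, and remove one of $v$'s edges if $\deg_G(v) = 2$. Then $|M| \le 1$ is trivially a matching, and $H$ has a pendant vertex, so (as $n \ge 3$) $H$ is non-Hamiltonian; output $(H, \false)$.

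The main obstacle is the case $\delta(G) \ge 3$: since a matching edit changes each degree by at most $\pm 1$, we have $\delta(H) \ge 2$, so no pendant or isolated vertex can be engineered by such an edit. The plan here is to exploit polynomial-time-decidable Hamiltonicity substructures. Concretely: first check whether $G$ has a cut vertex or bridge, in which case $G$ itself is non-Hamiltonian and we output $(G, \false)$. Otherwise $G$ is $2$-connected; if $\delta(G) \ge n/2 - 1$ we can boost every low-degree vertex by $1$ via an added matching and invoke Dirac's theorem on $H$ to declare it Hamiltonian. In the remaining, cubic-like regime $\Delta(G) \le 3$, we invoke Petersen's theorem to extract a perfect matching of $G$ (using bridgelessness) and output $H = G \setminus M$, a $2$-regular graph whose Hamiltonicity is trivially readable. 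The delicate step will be to verify that one of these reductions is always available and polynomially findable uniformly over all graphs with $\delta(G) \ge 3$.
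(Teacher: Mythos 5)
Your $d=0$ direction and your handling of $\delta(G)\le 2$ are fine. The gap is in the $\delta(G)\ge 3$ branch, and you already sense it: the two tools you reach for, Dirac's theorem (requiring $\delta(G)\ge n/2-1$, perhaps after boosting) and Petersen's theorem (requiring $\Delta(G)\le 3$), leave uncovered every $2$-connected graph with $3\le\delta(G)<n/2-1$ and $\Delta(G)\ge 4$ --- for instance any large $2$-connected $4$-regular graph. There is no ``cubic-like regime'' that mops up the remainder, and the Dirac boosting step is itself shaky: making an edit matching in the complement that covers exactly the low-degree vertices need not exist or be findable. Your underlying worry (``a matching edit changes each degree by at most $\pm 1$, so no pendant vertex can be engineered'') is correct but points you in the wrong direction: when the degree is high you should not be trying to certify non-Hamiltonicity.

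The paper's argument dissolves all of this with a single observation. Compute a perfect matching of $G$. If $G$ is disconnected or has no perfect matching, output $(G,\nil)$, since a Hamiltonian cycle contains both a spanning connected subgraph and a perfect matching. Otherwise label the matching $M=\{u_1v_1,\dots,u_pv_p\}$ and note that the ``connector'' edges $v_iu_{i+1}$ (plus a closing edge, routed through the odd leftover vertex if $|V(G)|$ is odd) themselves form a matching $J$ disjoint from $M$. Then $H:=G\cup J$ contains the Hamiltonian cycle $u_1v_1u_2v_2\cdots u_pv_p u_1$, and $G\triangle H\subseteq J$ has maximum degree at most~$1$, so $\distd(G,H)\le 1$. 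This works uniformly in all degrees: whenever a perfect matching exists you always produce a \emph{positive} nearby instance, so you never have to manufacture a pendant vertex at all.
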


To establish the hardness radius of~\textsc{Hamiltonian Cycle}, $\diste$ proves to be a~more challenging distance.
We improve the tractability radius from $n/2$ (due to a~simple perfect matching argument used to obtain~\cref{thm:ham-hr-distd}) to~$n/3$.

\begin{theorem}\label{thm:ham-tr-diste}
  \rel$(\textsc{Hamiltonian Cycle}, \diste, n/3)$ is in \FP.
\end{theorem}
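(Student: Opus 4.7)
The plan is a case analysis on the minimum degree $\delta(G)$ of the input graph $G$ on $n$ vertices. If $\delta(G) \leq \lfloor n/3 \rfloor$, I would pick a vertex $v$ of minimum degree and delete all $\deg_G(v) \leq n/3$ edges incident to $v$, obtaining a graph $H$ in which $v$ is isolated and which is therefore non-Hamiltonian; then $\diste(G,H) \leq n/3$ and we output $(H, \false)$.

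In the opposite case $\delta(G) > n/3$, the goal is to produce a Hamiltonian graph $H$ close to $G$. First test 2-connectivity of $G$ in polynomial time; if it fails, $G$ is itself non-Hamiltonian, so we output $(G, \false)$ with zero edits. Otherwise, by Dirac's classical theorem on the circumference of 2-connected graphs, $G$ contains a cycle $C$ of length at least $\min(n, 2\delta(G)) > 2n/3$, which can be found in polynomial time from the constructive proof. If $|V(C)| = n$, output $(G, \true)$. Otherwise, $U := V(G) \setminus V(C)$ has size $|U| < n/3$, and the task reduces to inserting the vertices of $U$ into the cycle at a total cost of at most $n/3$ edge edits.

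The elementary ``cheap insertion'' of $v \in U$ locates two consecutive vertices $x,y$ of the current cycle $C_i$ with $xv, vy \in E(G)$, deletes the cycle edge $xy$, and routes the cycle through $v$ using the pre-existing edges $xv, vy$; this costs a single edit per insertion. If every vertex of $U$ admits such a cheap insertion into the growing cycle, the total edit count is at most $|U| < n/3$ and the resulting graph $H$ is Hamiltonian, so we output $(H, \true)$.

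The main obstacle will be proving that a cheap insertion is always available. The simple counts give only $|N_G(v) \cap V(C_i)| \geq \delta(G) - |U| + 1$, which is too small to force two consecutive neighbors of $v$ on $C_i$ by raw pigeonhole. The crux of the argument therefore has to combine the bound $\delta(G) > n/3$ with the (preserved) cycle length $|C_i| > 2n/3$ by means of a more refined counting or by Pósa-style rotations that rearrange $C_i$ so that two neighbors of $v$ land at consecutive positions; one may also have to allow insertions of slightly larger cost and amortize against the surplus budget $n/3 - |U|$. It is precisely here that the value $n/3$ of the tractability radius reflects the delicate balance between the minimum-degree threshold and the cost of absorbing leftover vertices into a long cycle.
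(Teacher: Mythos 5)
Your approach (minimum-degree dichotomy, 2-connectivity test, Dirac's circumference theorem, then vertex insertion) is genuinely different from the paper's, which never appeals to $\delta(G)$ or circumference bounds. The paper greedily grows a path $P = v_1\dots v_h$; if $h=n$ it closes the path with one edge; if $n-h < n/3$ it strings the leftover vertices onto the path at a total cost of at most $n/3$ additions; and otherwise (so $h \leqslant 2n/3$) it exploits that the path was grown greedily, so $v_1,v_h$ have all their neighbors on $P$. A single Ore-style rotation then shows that either one can lengthen the path (restart) or $\deg(v_1)+\deg(v_h)\leqslant h$, in which case some endpoint has degree at most $h/2\leqslant n/3$ and can be isolated to produce a negative instance. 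Your case 1 (isolate a min-degree vertex when $\delta \leqslant n/3$) plays the same role as the paper's endpoint-isolation, but the paper gets there without ever computing $\delta(G)$.

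However, as written your argument has a gap that you yourself flag, and you never close it. In fact the gap is partly self-inflicted: you count "deletes the cycle edge $xy$" as an edit, but it is not one. The output graph $H$ only needs to \emph{contain} a Hamiltonian cycle; unused edges of $G$ may stay. So inserting $v$ between consecutive cycle vertices $x,y$ costs \emph{zero} additions if both $xv,vy \in E(G)$, and at most two additions in general — never any deletion. Once you see this, the worry about finding two \emph{consecutive} neighbours of $v$ disappears. Since $h \geqslant 2\delta > 2n/3$ we have $|U| = n-h < n/3 \leqslant \delta$, so each leftover vertex has at least one neighbour on $C$; anchor the whole set $U$ as a single path segment at such an existing edge and close it back to the cycle. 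This costs at most $|U|$ edge additions, which is strictly below $n/3$, and no rotations or amortization are needed. As submitted, though, the proposal leaves the crux ("a cheap insertion is always available") as a conjecture rather than a proof, so it is not a complete argument.
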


The algorithm of~\cref{thm:ham-tr-diste} uses a~path-growing argument reminiscent of some proofs of Dirac's theorem~\cite{Dirac52} or Ore's theorem~\cite{Ore60}. 
We do not manage to match~\cref{thm:ham-tr-diste} with a~tight lower bound.
However we show that the hardness radius of \textsc{Hamiltonian Cycle} under $\diste$ is at~least $n^{\frac{1}{2}-o(1)}$.

\begin{theorem}\label{thm:ham-hardness-diste}
  For any $\beta > 0$, \rel$($\textsc{Hamiltonian Cycle}$, \diste, n^{\frac{1}{2}-\beta})$ is \NP-hard.
\end{theorem}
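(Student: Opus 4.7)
The plan is to reduce from Hamiltonian Cycle on some \NP-hard class of graphs (for example, 3-regular graphs) to \rel$(\textsc{Hamiltonian Cycle}, \diste, N^{\frac{1}{2}-\beta})$. Given an $n$-vertex instance $G$, we build in polynomial time an $N$-vertex graph $G'$ (with $N$ polynomial in~$n$) such that (i)~$G'$ is Hamiltonian iff $G$ is, and (ii)~every graph $H$ with $\diste(G', H) \leqslant N^{\frac{1}{2}-\beta}$ has the same Hamiltonicity as $G'$. With such a reduction, any \FP~algorithm for \rel$(\textsc{Hamiltonian Cycle}, \diste, N^{\frac{1}{2}-\beta})$ returns, on input $G'$, a pair $(H, b)$ with $b$ equal to the Hamiltonicity of $H$; by (ii) this equals the Hamiltonicity of $G'$, which by (i) equals that of $G$, thereby solving \textsc{Hamiltonian Cycle} in polynomial time and contradicting \NP-hardness.

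For the construction I would use a Hamiltonicity-preserving amplification: replace each vertex $v$ of $G$ by a dense ``cloud'' $C_v$ of $k = \Theta(n)$ vertices with a small set of designated ``port vertices'' (one per neighbor of $v$ in~$G$), and replace each edge $\{u,v\}$ of $G$ by a bipartite ``edge-gadget'' connecting the corresponding ports of $C_u$ and $C_v$ with $\Omega(k)$ edges arranged with high redundancy (e.g., a pseudorandom bipartite graph or several parallel matchings). The gadgets are tailored so that any Hamilton cycle of $G'$ visits the clouds in an order that forms a Hamilton cycle of $G$, and conversely every Hamilton cycle of $G$ lifts to $\Omega(k) = \Omega(\sqrt{N})$ \emph{edge-disjoint} Hamilton cycles of $G'$. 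Setting $N = \Theta(nk) = \Theta(n^2)$, the edit budget $N^{\frac{1}{2}-\beta} = n^{1-2\beta}$ is asymptotically below $k = n$, which is the margin that the two robustness arguments below need.

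Property~(ii) then splits by cases. In the YES case, the $\Omega(\sqrt{N})$ edge-disjoint Hamilton cycles guarantee that deleting at~most $N^{\frac{1}{2}-\beta} = o(\sqrt{N})$ edges leaves at~least one Hamilton cycle intact, so every nearby $H$ is Hamiltonian. In the NO case, the edge-gadgets are meant to act as rigid obstructions: any Hamilton cycle of a perturbed $H$ must either project to a (non-existent) Hamilton cycle of $G$ or be re-routed through $\Omega(\sqrt{N})$ freshly added edges, exceeding the budget. The principal obstacle is this NO-case robustness---designing the edge-gadgets so that no $o(\sqrt{N})$-edit perturbation can create an accidental Hamilton cycle when $G$ is not Hamiltonian. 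Achieving this likely requires the gadgets to be sparse yet strongly expanding, so that any ``shortcut'' bypassing the forced traversal pattern between clouds costs more edge additions than the adversary's budget, while keeping the YES-case redundancy intact.
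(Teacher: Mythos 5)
You have correctly identified the two robustness requirements (deletion-robustness for YES instances, addition-robustness for NO instances), your general shape (blow vertices up into cliques, connect them with rigid gadgets) is indeed what the paper does, and your parameter budget $N^{1/2-\beta}=o(\sqrt N)$ is the right one to aim for. Your YES-case idea of producing $\Omega(\sqrt N)$ edge-disjoint Hamiltonian cycles is in fact a slightly different (and plausible) route from the paper, which instead uses Ore's theorem inside each clique together with a degree-counting argument on the two connector vertices.

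The genuine gap is exactly where you flag it: the NO case. Your proposal says the gadgets should be ``sparse yet strongly expanding'' so that any shortcut costs too many added edges, but you give no construction achieving this, and with a \emph{single} blown-up copy it is not clear any exists; an adversary with budget $N^{1/2-\beta}=n^{1-2\beta}$ can add a near-linear (in $n$) number of edges, which is a lot of rerouting power in a graph whose combinatorial skeleton has only $n$ clouds. The paper's trick is to sidestep this difficulty entirely by a \emph{pigeonhole-over-copies} argument: it chains $2q$ disjoint copies of the source instance end-to-end through identified cut-vertices $s_i,t_i$ (so $\{s_i,t_i\}$ is a $2$-separator isolating the $i$th copy), with $q$ just above the edit budget. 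Since fewer than $q$ added edges can touch the interior of fewer than $2q$ copies, some copy receives no new edge, and a Hamiltonian cycle of the perturbed graph must cross that untouched copy via an $s_i$--$t_i$ Hamiltonian path, which projects back to a Hamiltonian path of the source instance. Without such a replication-plus-separator device, your NO-case robustness is an open problem, not a step; so as written the proposal does not yield a proof, and the missing idea is precisely the one the paper supplies.
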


\Cref{thm:ham-hardness-diste} offers a~first example of what we call a~\emph{robust reduction}, i.e., in the particular case of the metric $\diste$, a~reduction whose positive (resp. negative) instances can withstand the edition of some number of edges (in that case $n^{\frac{1}{2}-\beta}$).
In the proof of~\cref{thm:ham-hardness-diste}, we indeed design a~reduction where the produced Hamiltonian graphs cannot be made non-Hamiltonian by removing at~most~$n^{\frac{1}{2}-\beta}$ edges, and the produced non-Hamiltonian graphs cannot be made Hamiltonian by adding at~most~$n^{\frac{1}{2}-\beta}$ edges.
The former is ensured by applying Ore's theorem on some specific induced subgraphs, while the latter is arranged by piecing together copies of the same 2-connected component.

We further give some evidence (see~\cref{thm:barrier-longest-path}) that lowering the tractability radius at~$n^{1-\beta}$ for some fixed $\beta > 0$ may be challenging since that would imply beating the current polynomial-time best approximation algorithm for \textsc{Longest Path} in Hamiltonian graphs.

Turning our attention to the \textsc{Dominating Set} problem, we prove the following.
\begin{theorem}\label{thm:ds-hr-diste}
  \textsc{Dominating Set} has hardness radius $n^{1-o(1)}$ for $\diste$.
\end{theorem}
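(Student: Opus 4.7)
The plan is to establish the two halves of hardness radius $n^{1-o(1)}$ separately: tractability of \rel$(\textsc{Dominating Set}, \diste, n)$, and \NP-hardness of \rel$(\textsc{Dominating Set}, \diste, n^{1-\varepsilon})$ for every $\varepsilon \in (0,1]$. The tractability direction is immediate: given $(G,K)$ with $n = |V(G)| \geq 1$, the algorithm picks an arbitrary vertex $v$, adds the at~most $n-1$ missing edges from $v$ to $V(G) \setminus \{v\}$ to form $H$, and outputs $(H, \true)$ when $K \geq 1$, or $(G, \false)$ when $K = 0$ (a non-empty graph has no $0$-dominating set). The graph $H$ admits $\{v\}$ as a dominating set and $\diste(G,H) \leq n$, as required.

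For the hardness, the key ingredient is a $1$-Lipschitz property: for graphs $G, H$ on the same vertex set, $|\gamma(G) - \gamma(H)| \leq \diste(G,H)$, where $\gamma$ denotes the domination number. By the triangle inequality it suffices to handle a single edit. Adding an edge $e = uv$ cannot increase $\gamma$ (any old dominating set remains one), and can decrease it by at~most one: starting from a minimum dominating set $D$ of $G+e$, adjoining to $D$ whichever of $u,v$ lies outside $D$ (if any) yields a dominating set of $G$ of size $|D|+1$, since only $u$ or $v$ could have relied on $e$ to be dominated. The edge-deletion case is symmetric.

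The hardness then follows from a robust reduction from \textsc{Dominating Set}. On input $(G_0, k_0)$ with $n_0$ vertices, take $G$ to be the disjoint union of $N$ copies of $G_0$, where $N$ is the smallest integer satisfying $N > 2(Nn_0)^{1-\varepsilon}$; this forces $N = \Theta(n_0^{(1-\varepsilon)/\varepsilon})$, polynomial in $n_0$ for fixed $\varepsilon$. Writing $n := Nn_0$, output $(G, K)$ with $K := N k_0 + n^{1-\varepsilon}$. Additivity of $\gamma$ across connected components gives $\gamma(G) = N \gamma(G_0)$. Combined with the Lipschitz bound: if $\gamma(G_0) \leq k_0$, then every $H$ with $\diste(G,H) \leq n^{1-\varepsilon}$ satisfies $\gamma(H) \leq N k_0 + n^{1-\varepsilon} = K$ and \rel returns \true; if $\gamma(G_0) \geq k_0+1$, then $\gamma(H) \geq N(k_0+1) - n^{1-\varepsilon} > K$ by the choice of $N$, and \rel returns \false. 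Hence the oracle answer of \rel on $(G,K)$ resolves the original \textsc{Dominating Set} instance. The main obstacles are the careful per-edit Lipschitz verification and confirming that $N$ remains polynomial; both are routine, and no combinatorial gadget is needed beyond disjoint union.
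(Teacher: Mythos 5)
Your proposal is correct, but it takes a genuinely different route from the paper. The paper's hardness proof (its \cref{thm:ds-hardness-diste}) replaces every vertex of the source instance by a clique of size $q := 2\lfloor n^{1-\beta}\rfloor + 1$ (a \emph{blow-up}); the key observation is that any edit set of at most $\lfloor n^{1-\beta}\rfloor$ edges is incident to fewer than $q$ vertices of each clique, so each clique retains an untouched representative, and the domination number---and even the threshold $k$---transfers exactly between the source and target instances. You instead build the target as a disjoint union of $N = \Theta\bigl(n_0^{(1-\varepsilon)/\varepsilon}\bigr)$ copies, prove a $1$-Lipschitz bound $|\gamma(G)-\gamma(H)| \leq \diste(G,H)$ via a per-edit argument, and \emph{shift} the threshold by $n^{1-\varepsilon}$ to absorb the perturbation. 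The paper's construction is more surgical (threshold preserved, each edit neutralized locally), while yours is more generic: it would carry over to any vertex-additive graph parameter that is $O(1)$-Lipschitz under single edge edits, with no combinatorial gadget beyond disjoint union. Both yield valid robust reductions. Two small points worth tightening: $K := Nk_0 + n^{1-\varepsilon}$ should be an integer (take the floor), and one should dispense with the trivial case $k_0 \geq n_0$ up front so that $K \leq n - N + n^{1-\varepsilon} < n$. The tractability side via adding at most $n-1$ edges to create a universal vertex matches the paper's remark that the $n-1$ bound is trivial; the paper additionally improves this to $n/e$ (\cref{thm:ds-algo-diste}), though that is not needed for the $n^{1-o(1)}$ conclusion.
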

The upper bound of $n-1$ for the tractability radius of \textsc{Dominating Set} under $\diste$ is trivial. 
We give a~simple argument to bring this upper bound down to $n/e$, where $e$ is Euler's number.   
\medskip

\Cref{sec:is} is devoted to establishing the hardness (or tractability) radii for $\distd$ and $\diste$ of five central graph problems. 

\begin{theorem}\label{thm:1/2-4/3}
  \textsc{Independent Set}, \textsc{Clique}, \textsc{Vertex Cover}, \textsc{Coloring}, and \textsc{Clique Cover} have hardness radius $n^{\frac{1}{2}-o(1)}$ under $\distd$ and hardness radius $n^{\frac{4}{3}-o(1)}$ under $\diste$. 
\end{theorem}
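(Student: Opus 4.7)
The plan is to reduce the five problems to \textsc{Coloring} and \textsc{Independent Set}: graph complementation preserves both $\distd$ and $\diste$ and swaps \textsc{Independent Set}$\leftrightarrow$\textsc{Clique} together with \textsc{Coloring}$\leftrightarrow$\textsc{Clique Cover}; moreover \textsc{Vertex Cover}$(G,k)$ coincides with \textsc{Independent Set}$(G, n-k)$.

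\emph{Tractability.} For \textsc{Coloring}$(G,k)$ under $\distd$ at budget $n^{1/2}$, I split on~$k$. If $k \geq \sqrt n$, arbitrarily partition $V$ into $k$ parts of size $\leq \lceil n/k \rceil \leq \sqrt n$ and delete every intra-part edge: each vertex loses at~most $\sqrt n - 1$ edges, and the partition witnesses $\chi \leq k$ of the edited graph (output YES). If $k < \sqrt n$, add all missing edges among $k+1$ arbitrary vertices: each gains at~most $k \leq \sqrt n - 1$ edges and the edited graph contains a $K_{k+1}$ (output NO). A dual scheme handles \textsc{Independent Set}: plant a $k$-IS whenever $k \leq \sqrt n + 1$; otherwise, partition $V$ into $\lceil \sqrt n \rceil$ parts and complete each into a clique, forcing $\alpha \leq \lceil \sqrt n \rceil < k$. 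Under $\diste$ at $n^{4/3}$ the same blueprint works with the threshold between the two regimes shifted to $k = n^{2/3}$: partitioning into $k$ parts of size $\lceil n/k \rceil$ costs $O(n^2/k) \leq n^{4/3}$ edges whenever $k \geq n^{2/3}$, while planting a $(k+1)$-clique costs $\binom{k+1}{2} \leq n^{4/3}$ whenever $k \leq n^{2/3}$.

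\emph{Hardness.} Fix $\varepsilon > 0$; I design a \emph{robust reduction} in the sense of~\cref{thm:ham-hardness-diste}. Starting from Zuckerman's $n^{1-\delta}$-inapproximability of \textsc{Independent Set}, one obtains an $n$-vertex graph $G_0$ with $\alpha(G_0) \geq n^{1-\delta}$ in YES instances and $\alpha(G_0) \leq n^\delta$ in NO instances. I form the blow-up $G := G_0[N]$, replacing each vertex of $G_0$ by a cloud $C_v$ of $N$ pairwise non-adjacent copies and each edge $uv$ of $G_0$ by the complete bipartite graph between $C_u$ and $C_v$, of size $n' = Nn$ and with threshold $k'$ set just below $N n^{1-\delta}$. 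On the YES side, the planted IS of $N n^{1-\delta}$ vertices, after any $\distd$-edit of magnitude $d := (n')^{1/2-\varepsilon}$, induces a subgraph of maximum degree $\leq d$, so a greedy extraction yields an IS of size $\geq N n^{1-\delta}/(d+1)$, which is $\geq k'$ by the parameter choice. On the NO side, writing $s_v := |S \cap C_v|$ for an IS $S$ of the edited graph, the per-vertex budget bounds by $Nd$ the number of cross-edges that can vanish between two adjacent clouds, hence $s_u s_v \leq Nd$ for every $uv \in E(G_0)$; the set $L := \{v : s_v > \sqrt{Nd}\}$ is then independent in $G_0$, so $|L| \leq n^\delta$, which gives $|S| \leq n^\delta N + n \sqrt{Nd}$, kept strictly below $k'$ by taking $N$ polynomial enough in $n$. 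The $\diste$-variant rests on the same blow-up but exploits that erasing \emph{all} $N^2$ edges between two adjacent clouds (what an adversary needs to merge them into a bigger IS) costs $N^2$, prohibitive once $N^2$ exceeds $(n')^{4/3-\varepsilon}$; the YES side loses at~most one planted IS vertex per added intra-IS edge, leaving $\geq N n^{1-\delta} - d$ independent vertices. The exponent $\frac{4}{3}$ emerges from balancing this cloud-merging cost against the planting cost $\binom{k'}{2}$. Hardness for \textsc{Coloring} and \textsc{Clique Cover} is then obtained by coupling the blow-up with the inequality $\chi \geq n/\alpha$, since the FGLSS-type source also exhibits a matching chromatic gap.

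\emph{Main obstacle.} The crux is the NO-side robustness analysis: keeping the ``small-cloud spillover'' $n \sqrt{Nd}$ strictly below the YES bound $N n^{1-\delta}/(d+1)$ is precisely what forces the exponents $\frac{1}{2}$ and $\frac{4}{3}$, and may require either an auxiliary padding gadget or a structured (rather than plain complete-bipartite) blow-up between adjacent clouds to suppress the spillover. Once the \textsc{Independent Set} case is settled, the transfer to the other four problems via complementation and the substitution $k \mapsto n - k$ is then routine bookkeeping.
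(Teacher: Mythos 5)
Your tractability arguments are essentially the same win-win schemes the paper uses (plant a large independent set or a small clique cover, depending on which side of the $\sqrt n$ or $n^{2/3}$ threshold $k$ falls), and the reduction infrastructure (complementation for \textsc{Clique}/\textsc{Clique Cover}, the $k\mapsto n-k$ substitution for \textsc{Vertex Cover}) matches \cref{thm:lpi}. These parts are fine. But the hardness side has two significant problems.

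\textbf{The blow-up reduction is over-engineered and, as you yourself flag, does not close.} The paper's hardness proof for \textsc{Independent Set} (\cref{thm:hardness-is,thm:hardness-is-diste}) uses no blow-up at all: it hands the \emph{source graph} $G$ from the H\aa stad--Zuckerman gap directly to the hypothetical \rel\ algorithm with threshold $k=\lfloor\sqrt n\rfloor$ (resp. $\lfloor n^{2/3}\rfloor$), and then observes that the returned witness $S$ induces in $G$ a subgraph of maximum degree $\leq n^{1/2-\beta}$ (resp. $\leq n^{4/3-\beta}$ edges, hence degeneracy $O(n^{2/3-\beta/2})$), from which a large independent set of $G$ itself is extracted greedily. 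Your two-stage blow-up introduces the ``spillover'' term $n\sqrt{Nd}$ that must be beaten by the YES-side bound $Nn^{1-\delta}/(d+1)$, and a direct exponent count (set $N = n^c$, $d = (Nn)^{1/2-\varepsilon}$) shows these constraints are simultaneously satisfiable only when $\varepsilon$ is bounded away from $0$ (roughly $\varepsilon > 1/3$), which is the opposite of the regime needed to pin down the hardness radius. Your own remark that this ``may require an auxiliary padding gadget'' is correct; a much simpler fix is to drop the blow-up entirely.

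\textbf{Coloring is not routine bookkeeping.} The inequality $\chi\geq n/\alpha$ only lets you translate a \emph{lower} bound on $\chi$ into an upper bound on $\alpha$; it does not convert a proper $k$-coloring of the nearby graph $H$ (the positive output of a \rel(\textsc{Coloring}) algorithm) into anything about $G$. The paper handles this with a separate argument: given a partition $\mathcal P$ of $V(H)$ into $k$ independent sets of $H$ and the low-max-degree/few-edges guarantee on $G\triangle H$, one \emph{refines} $\mathcal P$ by a coloring of $G - E(H)$ (for $\distd$), and for $\diste$ one needs \cref{obs:m-to-degen} plus the Cauchy--Schwarz inequality to bound $\sum_i\lceil\sqrt{2a_i}\rceil$ where $a_i$ counts the edges of $G$ inside the $i$-th color class. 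Nothing in your sketch supplies this step, and it is exactly the part that forces the exponent $4/3$ for \textsc{Coloring} under $\diste$.
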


\Cref{thm:1/2-4/3} comprises twenty substatements: two algorithms and two hardness results for five problems.
Let us first note that all the problems of~\cref{thm:1/2-4/3} should be thought of as decision problems.
They take a~graph and an integer threshold, and ask for a~feasible solution at that threshold; see~\cref{sec:is} for the definitions.
However, since \cref{thm:1/2-4/3} even works when a~feasible solution is required in output (not just a~yes/no answer), we deal with this \emph{function variant} of the decision problem.

Solving \rel$($\textsc{Max Independent Set}$, \dist, d)$, where \textsc{Max Independent Set} is the optimization variant, is in principle much harder (see~\cref{sec:open}) than solving \rel$($\textsc{Independent Set}$, \dist, d)$.
In the latter, one can edit the input graph to be clearly below or clearly above the threshold.
In the former, whichever nearby graph ends up being selected for the output, its independence number should be precisely known.

We start by establishing \cref{thm:1/2-4/3} for~\textsc{Independent Set}.
The algorithms are simple win--win arguments, planting witnesses of low or of high independence number.
The matching lower bounds are based on the famously high inapproximability of~\textsc{Max Independent Set}~\cite{Hastad96,Zuckerman07}.
We then show that a~restricted form of polynomial-time reductions, namely polynomial-time isomorphisms both preserving the input size and the distance between inputs, transfers the hardness radius; see~\cref{thm:lpi} for a~precise statement.
As a~consequence, we obtain~\cref{thm:1/2-4/3} for \textsc{Clique} and \textsc{Vertex Cover}.

The algorithms for~\textsc{Independent Set} almost readily work for \textsc{Clique Cover}, and hence for \textsc{Coloring} via~\cref{thm:lpi}.
We finally adapt the hardness results for \textsc{Coloring} (and hence for \textsc{Clique Cover} via~\cref{thm:lpi}).
This requires a~bit more work, especially for the distance $\diste$ where we use a~classical upper bound of the chromatic number of $m$-edge graphs (\cref{obs:m-to-degen}) and the Cauchy--Schwarz inequality.

\begin{table}[t]
  \centering
  \small
  \begin{tabular}{lccc}
    \hline
    Problem & Easy at radius & Hard at radius & Statement \\
    \hline
    \multicolumn{4}{c}{for $\diste$} \\
    \hline
    \textsc{Hamiltonian Cycle}
      & $n/3$
      & $n^{1/2-o(1)}$
      & \cref{thm:ham-tr-diste,thm:ham-hardness-diste} \\ 

    \textsc{Dominating Set}
      & $n/e$
      & $n^{1-o(1)}$
      & \cref{thm:ds-algo-diste,thm:ds-hardness-diste} \\

    \textsc{Independent Set}
      & $n^{4/3}$
      & $n^{4/3-o(1)}$
      & \cref{thm:algo-is-diste,thm:hardness-is-diste} \\

    \textsc{Clique}, \textsc{Vertex Cover}
      & $n^{4/3}$
      & $n^{4/3-o(1)}$
      & \cref{thm:clique-vc} \\

    \textsc{Coloring}, \textsc{Clique Cover}
      & $n^{4/3}$
      & $n^{4/3-o(1)}$
      & \cref{thm:algo-coloring,thm:hardness-coloring-diste} \\
    \hline
    \multicolumn{4}{c}{for $\distd$} \\
    \hline
    \textsc{Hamiltonian Cycle}
      & $1$
      & $0$
      & \cref{thm:ham-hr-distd} \\ 

    \textsc{Independent Set}
      & $n^{1/2}$
      & $n^{1/2-o(1)}$
      & \cref{thm:algo-is,thm:hardness-is} \\

    \textsc{Clique}, \textsc{Vertex Cover}
      & $n^{1/2}$
      & $n^{1/2-o(1)}$
      & \cref{thm:clique-vc} \\

    \textsc{Coloring}, \textsc{Clique Cover}
      & $n^{1/2}$
      & $n^{1/2-o(1)}$
      & \cref{thm:algo-coloring,thm:hardness-coloring} \\
    \hline
  \end{tabular}
  \caption{Summary of the results.}
  \label{tab:summary-results}
\end{table}

\medskip

As far as approximation algorithms and parameterized complexity are concerned, \textsc{Vertex Cover}, which admits fixed-parameter tractable algorithms and polynomial-time \mbox{2-approximation} algorithms, is  ``much more tractable'' than the other problems of~\cref{thm:1/2-4/3}.
By contrast, the complexity measure of ascending hardness radius (for, say, $\diste$) groups \textsc{Vertex Cover} together with \textsc{Independent Set} and \textsc{Coloring}, higher up than \textsc{Dominating Set} and \textsc{Hamiltonian Cycle}.

\subsection{Open questions and potential future work}\label{sec:open}

\Cref{sec:ham} suggests the following question.

\begin{question}\label{q:ham-hr-diste}
  What is the hardness radius of \textsc{Hamiltonian Cycle} under $\diste$?
\end{question}

Toward answering~\cref{q:ham-hr-diste}, is it true that for any positive $k$, \rel$($\textsc{Hamil}$\- $\textsc{tonian Cycle}$, \diste, n/k)$ is in \FP?
We ask the same question for \textsc{Dominating Set}.

\begin{question}\label{q:ds-tr-diste}
  For any positive $k$, is \rel$(\textsc{Dominating Set}, \diste, n/k)$ in \FP?
\end{question}

In~\cref{thm:ds-algo-diste}, we show~\cref{q:ds-tr-diste} for $k = e$.

\begin{question}\label{q:ds-hr}
  What is the hardness radius of \textsc{Dominating Set} under $\distd$?
\end{question}

It is easy to see that for every integer $s > 0$, \rel$(\textsc{Dominating Set}, \distd, n/s)$ is in \FP.
On inputs $(G,k)$ with $k \leqslant s$, use the brute-force approach to decide if $G$ has a~dominating set of size $k$.
When instead $k > s$, pick any $k$ vertices and make each adjacent to $n/k < n/s$ distinct vertices; the original $k$ vertices now form a~dominating set. 

\begin{question}\label{q:max-clique}
  What is the hardness radius of \textsc{Max Clique} under $\distd$ and under~$\diste$?
\end{question}
More generally, we ask~\cref{q:max-clique} for the optimization versions of the problems of~\cref{thm:1/2-4/3}.

The \NP-hard \textsc{Triangle Partition} problem inputs an $n$-vertex graph $G$ with $n$ divisible by~3, and asks for a~partition of $V(G)$ into $n/3$ triangles.
It is straightforward that \rel$(\textsc{Triangle}$ $\textsc{Partition}, \distd, 2)$ is in \FP.

\begin{question}\label{q:triangle-partition}
  Is \rel$(\textsc{Triangle Partition}, \distd, 1)$ in \FP?
\end{question}

It is easy to see that \rel$(\textsc{Triangle Partition}, \diste, 2n/3)$ is in \FP.
How much can the tractability radius be pushed?

\begin{question}\label{q:triangle-partition-diste}
  What is the hardness radius of \textsc{Triangle Partition} under $\diste$?
\end{question}

For some applications when $\Pi$ is a~decision problem, one may prefer the one-sided variants of \rel$(\Pi, \dist, d)$, where one forces nearby instances to always be positive or always be negative.

\defproblem{\relp$(\Pi, \dist, d)$}{An input $I$ of $\Pi$.}{A~positive instance $J$ for $\Pi$ with $\dist(I,J) \leqslant d(|I|)$.}

\defproblem{\reln$(\Pi, \dist, d)$}{An input $I$ of $\Pi$.}{A~negative instance $J$ for $\Pi$ with $\dist(I,J) \leqslant d(|I|)$.}

These problems are one step closer to editing problems, especially \relp$(\Pi, \dist, d)$, but still the settings are somewhat distinct.
One can ask for hardness/tractability radii for these one-sided variants.

Although we only consider graph problems in the current paper, one can tackle the hardness/tractability radii of \rel$(\Pi, \dist, d)$ for formula, hypergraph, string, or geometric problems (to name a~few) with the relevant distances $\dist$ between inputs.
For instance, one could consider Hamming or edit distance for strings or formulas, or suitable perturbation measures for geometric inputs. 
Besides, the chosen dividing line to define these radii need not be \FP\,vs \NP-hard. 

Beyond the specific questions we raised, we would find it particularly interesting to develop a more systematic complexity theory for \rel\ (in the spirit of~\cref{thm:lpi}), identify reduction notions that preserve hardness and tractability radii, understand the one-sided and optimization variants, and determine whether broad classes of problems admit common explanations for their radii.

\section{Graph and set notation}\label{sec:prelim}

If $i$ and $j$ are two integers, we denote by $[i,j]$ the set of integers that are at~least~$i$ and at~most~$j$.
For every integer $i$, $[i]$ is a shorthand for $[1,i]$.

We denote by $V(G)$ and $E(G)$ the vertex set and the edge set, respectively, of a graph~$G$.
If $G$ is a~graph and $S \subseteq V(G)$, we denote by $G[S]$ the subgraph of $G$ induced by~$S$, and use $G-S$ as a~shorthand for $G[V(G) \setminus S]$.
We denote the open and closed neighborhoods of a vertex $v$ in $G$ by $N_G(v)$ and $N_G[v]$, respectively.
For $S \subseteq V(G)$, we set $N_G(S) := \bigcup_{v \in S}N_G(v) \setminus S$ and $N_G[S] := N_G(S) \cup S$.
A~\emph{dominating set} of $G$ is a~subset $S \subseteq V(G)$ such that $N_G[S]=V(G)$.
We say that $v \in V(G)$ (resp.~$S \subseteq V(G)$) dominates $X \subseteq V(G)$ if $X \subseteq N_G[v]$ (resp.~$X \subseteq N_G[S]$).
In every notation with a~graph subscript, we may omit it if the graph is clear from the context.

If $e \in E(G)$, we denote by $G-e$ the graph $G$ with edge $e$ removed (but the endpoints of $e$ remain).
More generally, if $F \subseteq E(G)$, $G-F$ is the graph obtained from $G$ by removing all the edges of~$F$ (but not their endpoints).
We denote by $G+e$ (resp.~$G+F$) the graph with vertex set $V(G)$ and edge set $E(G) \cup \{e\}$ (resp. $E(G) \cup F$).

The \emph{clique number} of a~graph $G$, denoted by~$\omega(G)$, is the maximum size of a~clique of~$G$, i.e., a~set of pairwise adjacent vertices.
The \emph{independence number} of $G$, denoted by~$\alpha(G)$, can be defined as the clique number of $\overline G$, \emph{the complement of $G$}, which flips edges and non-edges.
The \emph{chromatic number} of $G$, denoted by~$\chi(G)$, is the least number of colors required to properly color~$G$, i.e., give a~color to each vertex such that no edge has its two endpoints of the same color, or equivalently the least number of parts in a~partition of~$V(G)$ into independent sets of~$G$.
We will often use the simple fact that any graph $G$ of maximum degree $\Delta$ has chromatic number at~most~$\Delta+1$, and has an independent set of size at~least $|V(G)|/(\Delta+1)$. 

A~\emph{vertex cover} of~$G$ is a~subset $S \subseteq V(G)$ such that every edge of~$G$ has at~least one endpoint in~$S$.
A~\emph{Hamiltonian cycle} (resp.~\emph{Hamiltonian path}) is a~cycle (resp.~path) passing through each vertex exactly once.
A~graph is said \emph{Hamiltonian} if it admits a~Hamiltonian cycle.
A~\emph{perfect matching} of $G$ is a~matching of~$G$ with $\lfloor |V(G)|/2 \rfloor$ edges.

We denote by $A \triangle B$ the \emph{symmetric difference} of $A$ and $B$, i.e., $(A \setminus B) \cup (B \setminus A)$. 
If two graphs $G, H$ have the same vertex set, then $G \triangle H$ denotes the graph with vertex set $V(G)=V(H)$ and edge set $E(G) \triangle E(H)$.
A~useful simple fact about the symmetric difference is that $A = B \triangle C$ implies $A \triangle B = C$.
This holds when $A, B, C$ are sets or graphs.

\section{Hamiltonicity and Domination}\label{sec:ham}

We treat \textsc{Hamiltonian Cycle} (and similarly \textsc{Hamiltonian Path}) as a~function problem.
In particular, an algorithm for \rel$($\textsc{Hamiltonian Cycle}$, \dist, d)$ returns, on input $G$, a~pair $(G',C)$ where $C$ is a~Hamiltonian cycle of $G'$, or $(G',\nil)$ if $G'$ is not Hamiltonian. 

We first observe that the hardness radius of \textsc{Hamiltonian Cycle} under $\distd$ is 0.
(The same holds with some small adjustments for \textsc{Hamiltonian Path}.)
As the distance $\distd$ only takes integer values, we shall simply show that \rel$($\textsc{Hamiltonian Cycle}$, \distd, 1)$ is tractable.
Indeed, \rel$(\Pi, \dist, 0)$ is equivalent to $\Pi$ for every problem $\Pi$ and distance $\dist$; hence \rel$($\textsc{Hamiltonian Cycle}$, \distd, 0)$ is \NP-hard.
Thus the following theorem implies \cref{thm:ham-hr-distd}.

\begin{theorem}\label{thm:ham-cycle-easy-dd}
  \rel$($\textsc{Hamiltonian Cycle}$, \distd, 1)$ is in \FP.
\end{theorem}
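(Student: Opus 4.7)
The plan is to exploit that a~$\distd$-edit of radius~$1$ is exactly an XOR with an arbitrary matching on $V(G)$---possibly a perfect one. The key structural observation is that a Hamiltonian cycle on even~$n$ decomposes into two edge-disjoint perfect matchings, so once $G$ already admits a perfect matching~$M$, planting a single further matching suffices to close the cycle. Conversely, a graph without any (near-)perfect matching trivially fails to be Hamiltonian, so we can safely return $\nil$ in that regime.

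First I compute a maximum matching~$M$ of~$G$ in polynomial time via Edmonds' algorithm. Suppose $n$ is even. If $M$ is not perfect, then $G$ has no Hamiltonian cycle at all (alternating the edges of any Hamiltonian cycle on even order yields a perfect matching), so I output $(G, \nil)$. If $M$ is perfect, I write $M = \{u_i v_i : i \in [n/2]\}$ and form the \emph{link set} $M^* := \{v_i u_{i+1} : i \in [n/2-1]\} \cup \{v_{n/2} u_1\}$. A direct check shows $M^*$ is itself a perfect matching disjoint from~$M$, and $M \cup M^*$ is precisely the Hamiltonian cycle $u_1, v_1, u_2, v_2, \ldots, u_{n/2}, v_{n/2}, u_1$. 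Setting $F := M^* \setminus E(G)$, the graph $G' := G \triangle F = G \cup F$ contains this cycle and satisfies $\distd(G, G') \le 1$ because $F$ is a sub-matching of~$M^*$.

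The odd-$n$ case is where the main subtlety lies. If $|M| < (n-1)/2$, the same alternating-edges argument (this time yielding a near-perfect matching) certifies $G$ non-Hamiltonian, so I output $(G, \nil)$. Otherwise $|M| = (n-1)/2$ with a unique exposed vertex~$w$. If $\deg_G(w) = 0$ then $w$ is isolated and $G$ is non-Hamiltonian, so again output $(G, \nil)$. Otherwise any neighbor of~$w$ must be $M$-saturated; after relabeling so that $w \sim u_1$ in~$G$, the target cycle is $w, u_1, v_1, u_2, v_2, \ldots, u_{(n-1)/2}, v_{(n-1)/2}, w$, whose only potentially missing edges form the set $M^* := \{v_i u_{i+1} : i \in [(n-3)/2]\} \cup \{v_{(n-1)/2} w\}$. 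The crucial point is that spending the pre-existing edge $w u_1$ on one of $w$'s two cycle incidences leaves $w$ in only the single edge $v_{(n-1)/2} w$ of~$M^*$; together with the pairwise vertex-disjoint edges $v_i u_{i+1}$, this makes $M^*$ a matching, and the argument concludes as in the even case by setting $F := M^* \setminus E(G)$.

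The whole algorithm runs in polynomial time, dominated by the maximum matching computation. The only delicate point is this odd case: a naive cyclic pairing would place~$w$ on two brand-new edges, violating the matching condition on the edit, and the remedy is to charge one of $w$'s two cycle edges to a pre-existing edge of~$G$---which is possible precisely when $\deg_G(w) \geq 1$, i.e., exactly when $G$ cannot already be dismissed via an isolated vertex.
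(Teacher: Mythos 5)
Your proof is correct and follows essentially the same route as the paper's: compute a (near-)perfect matching, return $\nil$ if none exists, and otherwise XOR in a second matching $M^*$ that closes a Hamiltonian cycle while keeping $\Delta(G \triangle G') \leq 1$. The only cosmetic difference is in the odd-order case: the paper first dismisses disconnected inputs and then uses connectedness to find a saturated neighbour of the exposed vertex, whereas you skip the connectivity check and instead observe that any neighbour of the exposed vertex must be $M$-saturated by maximality of $M$, returning $\nil$ only when that vertex is isolated---both resolve the same subtlety about not spending two new edges on the exposed vertex.
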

\begin{proof}
  If $G$ is disconnected, simply output $(G,\nil)$ as a~disconnected graph cannot be Hamiltonian.
  We now assume that $G$ is connected.
  Compute a~perfect matching of~$G$ in polynomial time.
  If this step reports that $G$ has no perfect matching, again output $(G,\nil)$.
  Indeed a~Hamiltonian cycle yields a~perfect matching by taking every other edge on the cycle.

  We now assume that $G$ is connected, and that we have found a~perfect matching $M := \{u_1v_1, u_2v_2, \ldots, u_pv_p\}$ of~$G$.
  If $V(G)$ is odd, we further assume without loss of generality (as $G$ is connected) that the unique vertex $u$ of $G$ outside $M$ is adjacent to~$u_1$.
  We build the following graph $J$: Start with the edgeless graph on vertex set $V(G)$, and for every $i \in [p-1]$ if $v_iu_{i+1} \notin E(G)$ then add the edge $v_iu_{i+1}$ to $E(J)$.
  Finally if $V(G)$ is odd and $v_pu \notin E(G)$, add $v_pu$ to $E(J)$.
  If $V(G)$ is even and $v_pu_1 \notin E(G)$, add $v_pu_1$ to $E(J)$.

  We set $H := G \triangle J$, thus $G \triangle H = J$.
  It can be observed that $J$ has maximum degree at~most~1.
  Therefore $\distd(G,H) \leqslant 1$.
  And we can output $(H,(u)u_1v_1u_2v_2 \ldots u_pv_p)$.
\end{proof}

Note that the proof of~\cref{thm:ham-cycle-easy-dd} shows that \rel$($\textsc{Hamiltonian Cycle}$, \diste, n/2)$ is in~\FP.
We now see that we can improve on the $n/2$ bound.

\begin{reptheorem}{thm:ham-tr-diste}\label[theorem]{thm:ham-cycle-easy-n/3}
  \rel$($\textsc{Hamiltonian Cycle}$, \diste, n/3)$ is in \FP.
\end{reptheorem}
\begin{proof}
  Greedily grow a~path $P$ in the $n$-vertex input $G$.
  That is, while an endpoint of $P$ has a~neighbor outside $V(P)$, add this neighbor to the path.
  When this process ends, we get a~path $P=v_1v_2 \ldots v_h$.
  If $h=n$, we output $(G+v_1v_h, v_1v_2 \ldots v_h)$.
  Similarly, if $|V(G) \setminus V(P)|=n-h$ is smaller than $n/3$, it is easy to add at~most $n/3$ edges to $G$ to make it Hamiltonian.
  We therefore assume that $h \leqslant 2n/3$, and further assume, as before, that $G$ is connected.

  We will now show that either at~least one endpoint $v_1, v_h$ has degree at~most $h/2 \leqslant n/3$, or we can find a~strictly longer path $P'$ and restart the argument.
  First, we observe that $v_1$ and $v_h$ have, by construction, no neighbors in $V(G) \setminus V(P)$ (otherwise $P$ could be extended).
  \begin{claim}\label{clm:longer-path}
    If both $v_1v_{i+1}, v_iv_h$ are edges of~$G$, then one can effectively find in $G$ a~path on $h+1$ vertices.
  \end{claim}
  \begin{proofofclaim}
    As $G$ is connected (and $P$ is not a~Hamiltonian path), there is a~vertex $v_j \in V(P)$ with a~neighbor $v \in V(G) \setminus V(P)$.
    By symmetry, we can assume that $1 \leqslant j \leqslant i$.
    Then, $vv_jv_{j+1} \ldots v_{i-1}v_iv_hv_{h-1} \ldots v_{i+2}v_{i+1}v_1v_2 \ldots v_{j-2}v_{j-1}$ is the desired $P'$ on $h+1$ vertices. 
  \end{proofofclaim}
  
  While \cref{clm:longer-path} applies, we proceed with the longer path $P'$ (this will happen fewer than $n$ times).
  If it does not apply, then the sum of the degrees of $v_1$ and $v_h$ is upper bounded by~$h$.
  Hence one of $v_1, v_h$ has degree at~most $h/2 \leqslant n/3$.
  In this case, we isolate $v_1$ or $v_h$ by at~most $n/3$ edge deletions, thereby producing a~clearly negative instance.
\end{proof}

We next see how to make robust reductions for \textsc{Hamiltonian Cycle}, which can essentially sustain a~square root number of edge edits (in the number of vertices).
There is still some room between this lower bound and the upper bound of~\cref{thm:ham-cycle-easy-n/3}.
We will later explain why a~truly sublinear upper bound might be difficult to obtain.

\begin{reptheorem}{thm:ham-hardness-diste}\label[theorem]{thm:ham-cycle-hard}
  For any $\beta > 0$, \rel$($\textsc{Hamiltonian Cycle}$, \diste, n^{\frac{1}{2}-\beta})$ is \NP-hard.
\end{reptheorem}
\begin{proof}
  We reduce from \textsc{Hamiltonian Path} on subcubic graphs with two vertices of degree~1. 
  In particular, any Hamiltonian path has to have these vertices as endpoints.
  This problem is \NP-hard; see for instance~\cite{GareyJS76,Arkin09}, which in fact proves that \textsc{Hamiltonian \emph{Cycle}} is \NP-complete in subcubic graphs.
  In the latter paper---which shows a~stronger result---the constructed graphs have two adjacent vertices of degree~2.
  Thus the edge between them can be removed to show the \NP-hardness of our Hamiltonian path problem. 
  
  Let $H$ be an $\nu$-vertex subcubic graph, and $s, t \in V(H)$ be its two vertices of degree~1.
  We build an $n$-vertex instance $G$ of \rel$($\textsc{Hamiltonian Cycle}$, \diste, n^{\frac{1}{2}-\beta})$.

  \medskip
  
  \textbf{Construction of~$\bm{G}$.}
  Let $q := \lfloor n^{\frac{1}{2}-\beta} \rfloor + 1$, and make $2q$ copies $H_1, \ldots, H_{2q}$ of $H$.
  For every $i \in [2q]$, denote by $s_i, t_i$ the copies of $s, t$, respectively, in~$H_i$.
  Identify $t_i$ and $s_{i+1}$ for every $i \in [2q-1]$, and $s_1$ and $t_{2q}$.
  Let us call $G'$ the graph obtained at this stage.

  Now turn every vertex $u$ in $V(G') \setminus \bigcup_{i \in [2q]} \{s_i,t_i\}$ (actually $\bigcup_{i \in [2q]} \{s_i,t_i\} = \bigcup_{i \in [2q]} \{s_i\}$) into a~clique $C_u$ of size $q+2$.
  The vertices $s_i$ and $t_i$ remain single vertices.
  For every edge $uv \in E(G')$ not incident to an~$s_i$ or $t_i$, add a~new vertex $x_{uv}$ ($=x_{vu}$).
  Make $x_{uv}$ fully adjacent to $C_u$ and to~$C_v$ (there is no edge between $C_u$ and $C_v$).
  For every $i \in [2q]$, let $s'_i$ (resp.~$t'_i$) be the unique neighbor of $s_i$ (resp.~of~$t_i$) in $H_i$.
  Make $s_i$ fully adjacent to~$C_{s'_i}$, and $t_i$ fully adjacent to~$C_{t'_i}$.
  For convenience, we may use $x_{s_is'_i}, x_{t_it'_i}$ as aliases for $s_i, t_i$, respectively. 
  This finishes the construction of~$G$.

  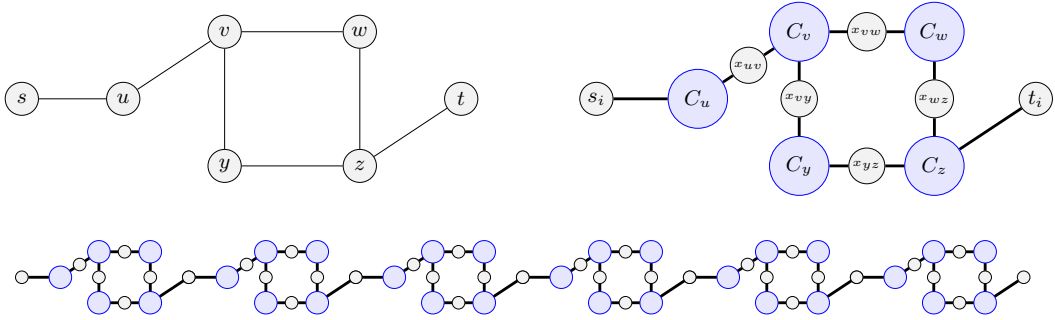
\begin{figure}[h!]
    \centering
\resizebox{\textwidth}{!}{
\begin{tikzpicture}[
   vertex/.style={circle, draw=black, fill=gray!10, minimum size=14pt, inner sep=1pt},
   clique/.style={circle,draw=blue,fill=blue!10,minimum size=25pt,inner sep=1pt},
   cliqueinv/.style={circle,minimum size=25pt,inner sep=1pt},
   font=\small]

  \def\hs{8.5}
  
  \foreach \label/\x/\y in {s/0/0, u/1.5/0, v/3/1, w/5/1, y/3/-1, z/5/-1, t/6.5/0} {
    \node[vertex] (\label) at (\x, \y) {$\label$};
  }
  
  \foreach \label/\x/\y in {u/1.5/0, v/3/1, w/5/1, y/3/-1, z/5/-1} {
    \node[clique] (C\label) at (\x + \hs, \y) {$C_{\label}$};
  }
  \foreach \label/\x/\y in {s/0/0, t/6.5/0} {
    \node[vertex] (C\label) at (\x + \hs, \y) {$\label_i$};
  }

  \foreach \a/\b in {s/u,u/v,v/w,w/z,z/t,v/y,y/z}{
    \draw (\a) -- (\b) ;
  }

\foreach \a/\b in {u/v, v/w, w/z, v/y, y/z}{
  \path let \p1=(C\a),\p2=(C\b) in coordinate (mid-\a-\b) at ($(\p1)!.5!(\p2)$);
  \node[vertex] (x-\a-\b) at ($(mid-\a-\b)$) {\tiny{$x_{\a\b}$}};
}
\foreach \a/\b in {Cs/Cu, x-u-v/Cu, x-u-v/Cv, x-v-y/Cv, x-v-y/Cy, x-v-w/Cv, x-v-w/Cw, x-y-z/Cy, x-y-z/Cz, x-w-z/Cw, x-w-z/Cz, Cz/Ct}{
    \draw[very thick] (\a) -- (\b) ;
}

\def\q{6}
\begin{scope}[scale=0.38, transform shape, xshift=-15cm, yshift=-7cm]
  \foreach \i in {1,...,\q}{
    \begin{scope}[xshift=6.5 * \i cm]
  \foreach \label/\x/\y in {u/1.5/0, v/3/1, w/5/1, y/3/-1, z/5/-1} {
    \node[clique] (C\label) at (\x + \hs, \y) {};
  }
  \foreach \label/\x/\y in {s/0/0, t/6.5/0} {
    \node[vertex] (C\label) at (\x + \hs, \y) {};
  }

\foreach \a/\b in {u/v, v/w, w/z, v/y, y/z}{
  \path let \p1=(C\a),\p2=(C\b) in coordinate (mid-\a-\b) at ($(\p1)!.5!(\p2)$);
  \node[vertex] (x-\a-\b) at ($(mid-\a-\b)$) {};
}
\foreach \a/\b in {Cs/Cu, x-u-v/Cu, x-u-v/Cv, x-v-y/Cv, x-v-y/Cy, x-v-w/Cv, x-v-w/Cw, x-y-z/Cy, x-y-z/Cz, x-w-z/Cw, x-w-z/Cz, Cz/Ct}{
    \draw[very thick] (\a) -- (\b) ;
}
\end{scope}
}
\end{scope}

\end{tikzpicture}
}
\caption{Top left: A subcubic graph $H$ with two degree-1 vertices $s$ and $t$.
  Top right: The graph isomorphic to each $G_i$.
  Larger blue vertices are cliques of size $q+2$, and edges incident to them go toward all their vertices.
Bottom: The obtained graph $G$ with $q=3$, where the leftmost vertex $s_1$ and the rightmost vertex $t_{2q}$ are in fact the same vertex.}
\label{fig:ham-reduction}
\end{figure}
  
  We denote by $G_i$ the induced subgraph of~$G$ derived from~$H_i$.
  Observe that $$|V(G)|=2q ((|V(H)|-2)(q+2) + |E(H)| - 1)=\Theta(q^2 \nu).$$
  Thus $n=\Theta(n^{1-2\beta} \nu)$, which implies $n = \Theta(\nu^{\frac{1}{2 \beta}})$.
  As $\beta>0$ is a~fixed constant, this defines a~polynomial-time reduction.
  See~\cref{fig:ham-reduction} for illustrations.

  \medskip

  \textbf{Correctness.}
  The reduction relies on the following two claims.
  \begin{claim}\label{clm:ham}
    If $H$ admits an $s$--$t$ Hamiltonian path, then deleting (equivalently, editing) fewer than $q$ edges in $G$ cannot make it non-Hamiltonian.
  \end{claim}
  \begin{proofofclaim}
    Let $G^-$ be any graph obtained from $G$ by deleting at~most~$q-1$ edges.
    We show the following two properties that for every $x_{uv} \neq x_{vw} \in V(G)$
    \begin{compactitem}
    \item there is a~path in $G^-$, starting at~$x_{uv}$, ending at~$x_{vw}$, and whose vertex set is $C_v \cup \{x_{uv},x_{vw}\}$, and
    \item if $v$ has a~third neighbor $y$ (in its copy~$H_i$), then there is a~path in $G^-$, starting at~$x_{uv}$, ending at~$x_{vw}$, and whose vertex set is $C_v \cup \{x_{uv},x_{vw},x_{vy}\}$. 
    \end{compactitem}
    We can then conclude by mimicking an~$s$--$t$ Hamiltonian path $P$ of $H$ in every~$G_i$.
    More precisely, we start at $s_i = x_{s_is'_i}$ (for $i$ going from 1 to $2q$), and obey the following rules, where $P_i$ is the path $P$ in copy~$H_i$.
    When in $x_{uv}$, if the successor of $v$ along $P_i$ is $w$, we continue in $G_i$ with the $x_{uv}$--$x_{vw}$ path of the second item if $v$ has a~third neighbor~$y$ in $H_i$ and $x_{vy}$ was not traversed yet.
    We instead go with the $x_{uv}$--$x_{vw}$ path of the first item, otherwise.

    \medskip

    \textbf{First item.}
    Recall that Ore's theorem~\cite{Ore60} is that every $n$-vertex simple graph such that the sum of the degrees of any two non-adjacent vertices is at~least~$n$ is Hamiltonian.
   This implies that $G^-[C_v]$ is Hamiltonian.
   Indeed, a~non-adjacent pair of vertices in $G^-[C_v]$ has combined degree at~least $2q-(q-2)=q+2$.
   We fix a~Hamiltonian cycle $C$ of $G^-[C_v]$.

   The bipartite graph $G^-[\{x_{uv},x_{vw}\},C_v]$ has at~least $2(q+2)-(q-1)=q+5$ edges.
   This ensures that there are two consecutive vertices $u', w'$ along $C$ such that $x_{uv}u', x_{vw}w'$ are edges of $G^-[C_v \cup \{x_{uv},x_{vw}\}]$.
   If not, $G^-[\{x_{uv},x_{vw}\},C_v]$ would have at~most $q+2$ edges.
   The edge $x_{uv}u'$, the $(q+2)$-vertex path along $C$ from $u'$ to $w'$, and the edge $x_{vw}w'$ define our desired spanning path.

   \medskip

   \textbf{Second item.}
   Observe that $C_v \cup \{x_{vy}\}$ is a~clique on $q+3$ vertices.
   Again by Ore's theorem there is a~Hamiltonian cycle $C$ in $G^-[C_v \cup \{x_{vy}\}]$.
   In $G$, vertices $x_{uv}, x_{vw}$ both have $q+2$ neighbors in $C$.
   Let $s \geqslant 3$ be the number of neighbors of $x_{uv}$ in $C$ remaining in~$G^-$; hence $q+2-s$ edges between $x_{uv}$ and $C_v$ are removed in~$G^-$.
   These $s$ neighbors have themselves a~combined neighborhood~$N$ in~$C - \{x_{vy}\}$ of size at~least $s-1$.
   As at~most $s-3$ edges incident to $x_{vw}$ can be deleted, at~least one vertex of~$N$ is still adjacent to~$x_{vw}$.
   We make the $x_{uv}$--$x_{vw}$ Hamiltonian path in $G^-[C_v \cup \{x_{uv},x_{vw},x_{vy}\}]$ as in the previous paragraph.  
  \end{proofofclaim}

  \begin{claim}\label{clm:no-ham}
    If $H$ has no $s$--$t$ Hamiltonian path, then adding (equivalently, editing) fewer than $q$~edges in $G$ cannot make it Hamiltonian.
  \end{claim}
  \begin{proofofclaim}
    Let $G^+$ be any graph obtained from $G$ by adding at~most~$q-1$ edges.
    By the pigeonhole principle, there is at~least one $G_i$ such that no vertex of $G_i - \{s_i,t_i\}$ is incident to an edge added in~$G^+$.
    Thus the set $\{s_i,t_i\}$ separates $V(G_i) \setminus \{s_i,t_i\}$ from the rest of~$G^+$.
    So, for $G^+$ to be Hamiltonian, it should be that $G_i$ admits an $s_i$--$t_i$ Hamiltonian path.
    We shall then just prove that if $G_i$ admits an $s_i$--$t_i$ Hamiltonian path, then so does $H_i$ (hence $H$ has an $s$--$t$ Hamiltonian path).

    Let $u_1, u_2, \ldots, u_{\nu-2}$ be the vertices of $H_i - \{s_i,t_i\}$ in the order in which the cliques $C_u$ are visited for the first time, in some fixed $s_i$--$t_i$ Hamiltonian path $P$ of~$G_i$.
    For each clique~$C_u$, there is a~vertex subset $X_u$ of size at~most~3 (the up to 3 vertices of the form $x_{u,\bullet}$) that disconnects $C_u$ from the rest of the graph in~$G_i$.
    This implies that $C_u \cup X_u$ cannot be exited and reentered by~$P$.
    In particular, this means that $u_1=s'_i$, $u_{\nu-2}=t'_i$, and for every $j \in [\nu-3]$, $u_ju_{j+1} \in E(H_i)$.
    Therefore~$s_i, u_1, u_2, \ldots, u_{\nu-2}, t_i$ is an $s_i$--$t_i$ Hamiltonian path in~$H_i$.
  \end{proofofclaim}
 
  Thus, by~\cref{clm:ham,clm:no-ham}, if a~positive (resp.~negative) instance is reported by the algorithm solving \rel$\-($\textsc{Hamiltonian Cycle}$, \diste, n^{\frac{1}{2}-\beta})$, then $H$ has an $s$--$t$ Hamiltonian path (resp. has no~$s$--$t$ Hamiltonian path).
\end{proof}

We observe that pushing the tractability radius for $\diste$ (from $n/3$) down to $n^{0.99}$ requires improving the current polynomial-time approximation factor for \textsc{Longest Path} in Hamiltonian graphs.
The current best polynomial-time algorithm returns paths of length $\Omega(\frac{\log^2 n}{\log^2 \log n})$ in $n$-vertex Hamiltonian graphs~\cite{Vishwanathan04}.
While better guarantees can be obtained in bounded-degree Hamiltonian graphs~\cite{Feder00}, for any $\beta > 0$, no polynomial-time algorithm is known to output paths of length $n^\beta$ even in subcubic Hamiltonian graphs.

\begin{theorem}\label{thm:barrier-longest-path}
  For any $\beta > 0$, a~polynomial-time algorithm for \rel$($\textsc{Hamiltonian Cycle}$, \diste, n^{1-\beta})$ implies a~polynomial-time detection of paths of length $\lfloor n^\beta \rfloor-1$ in Hamiltonian graphs.
\end{theorem}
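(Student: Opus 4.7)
The plan is a black-box reduction. Given a Hamiltonian input graph $H$ on $\nu$ vertices, we invoke the assumed polynomial-time algorithm $A$ for \rel$($\textsc{Hamiltonian Cycle}$, \diste, n^{1-\beta})$ on $H$ itself, obtaining a pair $(H', C)$ with $\diste(H, H') \leqslant \nu^{1-\beta}$. Either $C$ is a Hamiltonian cycle of $H'$ (the \emph{positive case}), or $C = \nil$ and $H'$ is non-Hamiltonian (the \emph{negative case}). In the positive case we read off a long $H$-path by restriction; in the negative case we iterate or modify the host graph.

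For the positive case: $C$ has exactly $\nu$ edges, and at most $\nu^{1-\beta}$ of them lie in $E(H') \setminus E(H)$ since $|E(H) \triangle E(H')| \leqslant \nu^{1-\beta}$. Hence $E(C) \cap E(H)$ spans $V(H)$, has at least $\nu - \nu^{1-\beta}$ edges, and, being a subset of the edges of a cycle, decomposes into at most $\nu^{1-\beta}$ disjoint paths of $H$ covering $V(H)$. Pigeonhole then yields a path on at least $\nu^\beta$ vertices in~$H$, which we truncate to $\lfloor \nu^\beta \rfloor$ vertices and output.

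For the negative case: since $H$ is Hamiltonian while $H'$ is not, the deletion set $D := E(H) \setminus E(H')$ satisfies $|D| \leqslant \nu^{1-\beta}$ and must meet every Hamiltonian cycle of $H$ (otherwise such a cycle would persist in $H' \supseteq H - D$). Hence for any (unknown) Hamiltonian cycle $C_H$ of $H$, the edges $C_H \setminus D$ form a union of $|C_H \cap D| \leqslant \nu^{1-\beta}$ paths in $H - D$ spanning $V(H)$, and pigeonhole yields the \emph{existence} of a path on at least $\nu^\beta$ vertices in $H - D \subseteq H$. To actually \emph{find} such a path, the plan is to re-invoke $A$ on $H - D$ or on a cleverly-crafted nearby Hamiltonian graph; any call that lands in the positive case yields a long $H$-path by the same pigeonhole extraction, with at most a constant factor loss absorbed by slightly strengthening $\beta$.

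The main obstacle is guaranteeing that this iteration reaches a positive-case invocation in polynomial time, since a malicious $A$ could persistently return $\nil$ (e.g., outputting $(H - D, \nil)$ with no new edits on a non-Hamiltonian subgraph). A natural workaround is to replace $H$ by a \emph{robustly Hamiltonian} host graph $G$ obtained by joining $H$ with a sufficiently dense auxiliary structure, arranged so that every graph within $\diste$-distance $|V(G)|^{1-\beta}$ of $G$ is Hamiltonian, thereby forcing $A$ into the positive case. The delicate engineering is to attach the auxiliary part to $V(H)$ so that Hamiltonian cycles of nearby graphs still restrict to only $O(\nu^{1-\beta})$ paths on $V(H)$, balancing robustness (many redundant cross edges) against sparsity of entry points into $V(H)$ (few cross edges); this tension is the core technical challenge of the proof.
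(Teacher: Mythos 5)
Your positive-case extraction is correct and, rounding aside, matches the paper's calculation. But as you yourself acknowledge, the negative case is a genuine obstacle, and the plan you outline does not close it: a malicious algorithm for \rel\ can keep returning $\nil$ by editing $G$ into a non-Hamiltonian graph, and your ``re-invoke on $H-D$'' iteration has no termination or progress guarantee. The ``robustly Hamiltonian host graph'' idea you float at the end is the right instinct, but you leave its construction as an explicit open challenge (``this tension is the core technical challenge''), so the proposal is not a proof.

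The paper resolves the issue by not starting from an arbitrary Hamiltonian graph at all. It reuses the blow-up construction of~\cref{thm:ham-cycle-hard} with only two copies and cliques $C_u$ of size $q := \lfloor n^{1-\beta}\rfloor + 3$, starting from a subcubic $s$--$t$ Hamiltonian path instance $H$ (the problem known to remain hard for long-path detection). \cref{clm:ham}, whose proof relies on Ore's theorem applied inside the cliques, guarantees that no set of at most $\lfloor n^{1-\beta}\rfloor < q-2$ edge deletions can destroy a Hamiltonian cycle of the resulting graph $G$; hence when $G$ is Hamiltonian, the sidestep algorithm is \emph{forced} into the positive case, and the negative case you were worrying about simply cannot occur. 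The long path is then read off the returned cycle $C$ exactly as in your positive-case argument: at most $\lfloor n^{1-\beta}\rfloor$ of its $n$ edges are absent from $G$, so some $\lfloor n^\beta\rfloor$ consecutive edges of $C$ lie in $G$. Crucially, the paper reports that path in $G$ directly as the output; it does not attempt to pull it back through the cliques into a path of the source graph, which is precisely the ``delicate engineering'' step you identified and could not execute. So the structural difficulty you flagged is real, but the paper sidesteps it by choosing the Hamiltonian graph to be $G$ itself rather than trying to plant an arbitrary input $F$ inside a robust shell.
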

\begin{proof}
  We make the same reduction as in~\cref{thm:ham-cycle-hard} with only two copies. 
  Let $H, s, t$ be as in the previous proof, and $\nu$ be the number of vertices of~$H$.
  Let $q := \lfloor n^{1-\beta} \rfloor + 3$.
  We build an $n$-vertex instance $G$ of \rel$($\textsc{Hamiltonian Cycle}$, \diste, n^{1-\beta})$ as shown in~\cref{fig:ham-reduction} with cliques $C_u$ of size $q$ and solely two copies $H_1, H_2$ (and $G_1, G_2$).
  So $s_1=t_2$ and $s_2=t_1$.
  We observe that $G$ has $2((|V(H)|-2)q + |E(H)| - 1) = O(q \nu)$ vertices.
  This implies that $n = O(\nu^{1/\beta})$, so the reduction is indeed polynomial for any fixed $\beta > 0$.

  If the algorithm $\mathcal A$ for \rel$($\textsc{Hamiltonian Cycle}$, \diste, n^{1-\beta})$ outputs a~negative instance (of the form $(G',\nil)$), then we know from the proof of  \cref{thm:ham-cycle-hard} that $H$ has no $s$--$t$ Hamiltonian path, thus $G$ is not Hamiltonian; which does not happen within Hamiltonian graphs.
  The only remaining case is that $\mathcal A$ outputs a~positive instance~$(G',C)$.
  Among the $n$ edges of~$C$, at most $\lfloor n^{1-\beta} \rfloor$ were added to~$G$.
  This still leaves at~least $\lfloor n^\beta \rfloor - 1$ consecutive edges of~$G$ in~$C$.
  We thus obtain the sought path of length $\lfloor n^\beta \rfloor - 1$.
\end{proof}

We now move to the \textsc{Dominating Set} problem, where given a~graph $G$ and an integer $k \in [0,|V(G)|]$, one is asked for a~dominating set of size~$k$, or to output $\nil$ if none exists.

\begin{theorem}\label{thm:ds-algo-diste}
  \rel$(\textsc{Dominating Set}, \diste, n/e)$ is in \FP (where $e$ is Euler's number).
\end{theorem}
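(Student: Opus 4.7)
The plan is to leverage the classical greedy algorithm for \textsc{Max $k$-Coverage}, which gives a $(1-(1-1/k)^k) \geqslant (1 - 1/e)$ approximation guarantee, instantiated with the sets $\{N_G[v] : v \in V(G)\}$ covering the universe $V(G)$. The key observation is that a dominating set of size $k$ exists in $G$ if and only if $k$ of these closed neighborhoods union to $V(G)$, so greedy max coverage directly certifies one of the two sides of a clean win-win.

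More concretely, on input $(G,k)$ I would run the greedy process: start with $S = \emptyset$, and $k$ times iteratively add to $S$ a vertex $v$ maximizing $|N_G[v] \setminus N_G[S]|$. Let $R := V(G) \setminus N_G[S]$ be the set of leftover undominated vertices. If $|R| = 0$, output $(G,S)$. If $0 < |R| \leqslant n/e$, fix an arbitrary vertex $v \in S$ (say $v_1$) and let $F := \{vu : u \in R\}$; note these edges are not in $E(G)$ since $R \cap N_G[v] = \emptyset$. Output $(G+F, S)$: the symmetric difference with $G$ has exactly $|R| \leqslant n/e$ edges, and $S$ is a dominating set of the new graph. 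If $|R| > n/e$, output $(G, \nil)$.

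Correctness in the first two cases is immediate by construction. For the third case, the greedy guarantee states $|N_G[S]| \geqslant (1 - 1/e) \cdot \mathrm{OPT}$, where $\mathrm{OPT}$ is the maximum number of vertices that can be covered by $k$ closed neighborhoods of $G$. If $G$ admitted a dominating set of size $k$, we would have $\mathrm{OPT} = n$, hence $|R| = n - |N_G[S]| \leqslant n/e$, contradicting $|R| > n/e$. So $\gamma(G) > k$, and since $\diste(G,G) = 0 \leqslant n/e$, the pair $(G, \nil)$ is a correct answer.

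There is no real obstacle here; the entire argument is a careful repackaging of the textbook analysis of greedy max coverage. The only things to double-check are that $(1-1/k)^k \leqslant 1/e$ for all $k \geqslant 1$ (so the $n/e$ bound holds uniformly in $k$), and the boundary case $k = 0$ (where $|R| = n > n/e$ forces the $\nil$ branch, correctly since there is no dominating set of size $0$ in a nonempty graph).
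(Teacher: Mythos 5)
Your proposal is correct and follows essentially the same approach as the paper's proof: run the greedy $(1-1/e)$-approximation for \textsc{Max $k$-Coverage} on the closed-neighborhood set system, output $\nil$ if fewer than $(1-1/e)n$ vertices are covered, and otherwise attach the at most $n/e$ leftover vertices to a single vertex of the greedy solution. The minor extra care you take with the boundary case $k=0$ is sound but not needed beyond what the paper states.
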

\begin{proof}
  Let $(G,k)$ be any input of \rel$(\textsc{Dominating Set}, \diste, n/e)$ with $n = |V(G)|$.
  We first run the known polynomial-time $(1-\frac{1}{e})$-approximation algorithm (actually the greedy algorithm) for \textsc{Max $k$-Coverage}~\cite{Nemhauser78}, the problem of picking $k$ sets from a~given list so as to maximize the cardinality of their union, on the closed-neighborhood set system of~$G$, i.e., $\{N_G[v]~:$ $v \in V(G)\}$.

  If this outputs $k$ sets whose union has size less than $(1-\frac{1}{e})n$, we can safely output $((G,k), \nil)$, as this gives a~guarantee that no dominating set of size~$k$ exists in~$G$.
  Otherwise, we obtain a~set $S$ of $k$ vertices such that $X := V(G) \setminus N_G[S]$ is of size at~most~$\frac{n}{e}$.
  Let us fix some $v \in S$.
  Build the graph $H$ by adding $|X|$ edges to $G$: the edge $vx$ for every $x \in X$.
  We can output $((H,k),S)$ as $\diste(G,H) = |X| \leqslant n/e$ and $S$ is a~dominating set of~$H$ of size~$k$.
\end{proof}

The \emph{blow-up} operation (i.e., replacing every vertex by a~clique) provides a~simple robust reduction for \textsc{Dominating Set} under $\diste$. 

\begin{theorem}\label{thm:ds-hardness-diste}
  For any $\beta > 0$, \rel$(\textsc{Dominating Set}, \diste, n^{1-\beta})$ is \NP-hard.
\end{theorem}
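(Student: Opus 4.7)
The plan is to reduce from the \NP-hard \textsc{Dominating Set} problem via a clique blow-up. Given a $\nu$-vertex instance $(H,k)$, replace every $u \in V(H)$ by a clique $C_u$ of size $q$ and join $C_u$ completely to $C_{u'}$ whenever $uu' \in E(H)$. Call the resulting $n$-vertex graph $G$, where $n := q\nu$. Choosing $q$ as a suitable polynomial in $\nu$ (e.g., $q := \lceil \nu^{1+1/\beta} \rceil$) keeps $n$ polynomial in $\nu$, so the construction is in polynomial time, and guarantees $q > 2 n^{1-\beta}$ for all $\nu$ large enough (small instances can be handled by brute force). A standard argument shows that $(G,k)$ is positive for \textsc{Dominating Set} iff $(H,k)$ is: pick one vertex per clique indexed by a dominating set of $H$, or project a dominating set of $G$ back to $V(H)$.

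To establish robustness, it suffices to argue YES-to-YES against deletions alone (additions can only preserve a dominating set) and NO-to-NO against additions alone (deletions cannot create new dominating sets). For YES-to-YES, fix a dominating set $\{u_1,\ldots,u_k\}$ of $H$ and a deletion set $R$ with $|R| \leq n^{1-\beta}$. At most $2|R| < q$ vertices of $C_{u_i}$ are incident to some edge of $R$, so some $v_i \in C_{u_i}$ has no deleted incident edge. Since $v_i$ was adjacent in $G$ to every vertex of $W_i := \bigcup_{u \in N_H[u_i]} C_u$, all these edges survive, so $v_i$ dominates $W_i$ in the edited graph; together, $\{v_1,\ldots,v_k\}$ dominates $V(G)$.

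For NO-to-NO, assume by contradiction that the graph $G+A$ obtained by adding $A$ with $|A| \leq n^{1-\beta}$ admits a $k$-dominating set $T$. Projecting $T$ to $T_H := \{u \in V(H) : C_u \cap T \neq \emptyset\}$ gives a subset of $V(H)$ of size at most $k$, which, as $H$ is a NO instance, fails to dominate some $u \in V(H)$. In $G$ the clique $C_u$ is then disjoint from $N_G[T]$, so to dominate $C_u$ in $G+A$ each of its $q$ vertices must be the endpoint of a distinct edge of $A$ with other endpoint in $T$; this forces $|A| \geq q > n^{1-\beta}$, a contradiction. The main technical point is the parameter calibration: the blow-up must inflate cliques enough that the per-clique cost $q$ strictly exceeds the global edit budget $n^{1-\beta}$ (with a factor of $2$ for YES-to-YES), while keeping $G$ of polynomial size; the choice $q = \Theta(\nu^{1+1/\beta})$ simultaneously meets both requirements.
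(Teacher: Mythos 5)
Your proof is correct and takes essentially the same route as the paper: a clique blow-up of factor $q$, YES-to-YES robustness via a per-clique vertex untouched by any deleted edge, and NO-to-NO robustness by noting that an undominated clique $C_u$ would require $q$ added edges. The only cosmetic difference is that you fix $q$ explicitly as a polynomial in $\nu$ (namely $q = \lceil \nu^{1+1/\beta} \rceil$), whereas the paper sets $q := 2\lfloor n^{1-\beta}\rfloor + 1$ implicitly in terms of $n = q\nu$; both choices meet the same calibration $q > 2n^{1-\beta}$.
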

\begin{proof}
  Let $H$ be an $\nu$-vertex instance of \textsc{Dominating Set}.
  We build the $n$-vertex graph $G$ by replacing each vertex $v \in V(H)$ by a~clique $C_v$ of size $q := 2 \lfloor n^{1-\beta} \rfloor + 1$.
  Thus, two distinct vertices $x \in C_u, y \in C_v$ are adjacent in $G$ whenever $u = v$ or $uv \in E(H)$.
  We have $n = q \nu$, thus $n = O(\nu^{1/\beta})$.
  This reduction is polynomial for any fixed $\beta > 0$.

  We give to $G$ the same threshold as $H$, say $k$.
  We claim that a~polynomial-time algorithm solving \rel$(\textsc{Dominating}$ $\textsc{Set}, \diste, n^{1-\beta})$ on input $(G,k)$ would also solve \textsc{Dominating Set} on input $(H,k)$.
  The correctness is given by the following two claims.
  
  \begin{claim}\label{clm:dom}
    If $H$ admits a~dominating set of size $k$, then every graph $G'$ obtained from $G$ by removing at~most~$\lfloor n^{1-\beta} \rfloor$ edges has a~dominating set of size~$k$.
  \end{claim}
  \begin{proofofclaim}
    Note that for every $v \in V(H)$, there is at~least one vertex of $C_v$, say $x_v$, not incident to any edge of $G \triangle G'$ (removed edge).
    This is because $|C_v| = 2 \lfloor n^{1-\beta} \rfloor + 1$.
    Let $S \subseteq V(H)$ be a~dominating set of~$H$, and $S' := \{x_v~:~v \in S\}$.
    By construction, $|S'| = |S|$ and $x_v$ dominates in $G'$ the set $\bigcup_{w \in N_H[v]} C_w$.
    As $S$ is a~dominating set of~$H$, $S'$ is a~dominating set of~$G'$. 
  \end{proofofclaim}

  \begin{claim}\label{clm:no-dom}
    If $H$ has no dominating set of size $k$, then every graph $G'$ obtained from $G$ by adding at~most~$\lfloor n^{1-\beta} \rfloor$ edges has no dominating set of size~$k$.
  \end{claim}
  \begin{proofofclaim} 
    Let $S \subseteq V(G')$ be any set of size~$k$.
    Consider the set $S' := \{v \in V(H)~:~S \cap C_v \neq \emptyset\}$ of size at~most~$k$.
    By assumption, $S'$ is not a~dominating set of~$H$. 
    So there is a~$v \in V(H)$ such that $N_H[v] \cap S' = \emptyset$.
    By the observation in~\cref{clm:dom}, there is a~vertex $x_v \in C_v$ such that $x_v$ is not incident to any edge of $G \triangle G'$ (added edge).
    Vertex $x_v$ is thus not dominated by~$S$.
  \end{proofofclaim}
  
  We can conclude as a~positive (resp.~negative) answer to \rel$(\textsc{Dominating}$ $\textsc{Set}, \diste,$ $n^{1-\beta})$ on input $(G,k)$ implies a~positive (resp.~negative) answer to \textsc{Dominating Set} on input $(H,k)$.
\end{proof}

\Cref{thm:ds-hr-diste} then follows by~\cref{thm:ds-hardness-diste,thm:ds-algo-diste}.

\section{\textsc{Independent Set}, \textsc{Clique}, \textsc{Vertex Cover}, and \textsc{Coloring}}\label{sec:is}

We first show that \textsc{Independent Set} has hardness radius $n^{\frac{1}{2}-o(1)}$ under $\distd$, and $n^{\frac{4}{3}-o(1)}$ under $\diste$.
We recall that \textsc{Independent Set} inputs a~pair $(G,k)$ where $G$ is a~graph and $k$ is a~non-negative integer, and asks for an independent set of size $k$ in~$G$.
We could treat \textsc{Independent Set} purely as a~decision problem (with output in $\{\true, \false\}$).
However all the results presented in this section extend to the setting where the desired output is an actual independent set of size~$k$ in $G$, and $\nil$ if none exists. 

We start by observing that $\sqrt n$ upper bounds the tractability radius for $\distd$.

\begin{theorem}\label{thm:algo-is}
  \rel$($\textsc{Independent Set}$, \distd, n^{\frac{1}{2}})$ is in~\FP.
\end{theorem}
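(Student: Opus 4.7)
The plan is a two-case win--win argument pivoting at the threshold $k \approx \sqrt{n}$. Writing $n := |V(G)|$ and $\Delta := \lfloor \sqrt{n}\rfloor$ for the integral edit budget on $\Delta(G\triangle H)$, the two cases fit together at the seam $k \approx \Delta + 1$, and each case is resolved by an elementary gadget.

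In the small-threshold case $k \leqslant \Delta + 1$, I would \emph{plant} an independent set. Pick any $S \subseteq V(G)$ of size~$k$, let $J$ be the graph on vertex set $V(G)$ with edge set $E(G[S])$, and set $H := G \triangle J$. Then $S$ is independent in $H$ and $G \triangle H = J$ has maximum degree at most $k - 1 \leqslant \Delta \leqslant \sqrt{n}$, so one outputs $((H,k),S)$.

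In the large-threshold case $k \geqslant \Delta + 2$, I would instead \emph{destroy} all independent sets of size~$k$ by turning $H$ into a union of $k-1$ cliques. Partition $V(G)$ into $k-1$ parts $P_1,\ldots,P_{k-1}$ whose sizes differ by at most one, so that each $|P_i| \leqslant \lceil n/(k-1)\rceil$. Let $H$ be obtained from $G$ by adding every missing edge within each $P_i$. Then $P_1 \cup \cdots \cup P_{k-1}$ is a clique cover of $H$ of size $k-1$, so $\alpha(H) \leqslant k-1 < k$, and one outputs $((H,k),\nil)$.

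The one routine verification is the edit budget in the large-threshold case: since $k - 1 \geqslant \Delta + 1 > \sqrt{n}$, we have $n/(k-1) < \sqrt{n}$, hence $\lceil n/(k-1)\rceil \leqslant \lceil \sqrt{n}\rceil \leqslant \Delta + 1$. The edges of $G \triangle H$ lie entirely inside the parts, so each vertex is incident to at most $|P_i| - 1 \leqslant \Delta$ of them, as required. Both constructions are clearly polynomial, and correctness of the $\nil$ answer is immediate since an independent set of $H$ meets each clique $P_i$ in at most one vertex. I do not foresee a real obstacle---the statement is essentially a packaging of these two gadgets around the natural threshold $\sqrt{n}$.
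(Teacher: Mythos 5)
Your proposal is correct and follows essentially the same route as the paper: a two-case win--win split at $k\approx\sqrt{n}$, planting an independent set of size $k$ when $k$ is small and merging a balanced $(k-1)$-part partition of $V(G)$ into cliques when $k$ is large, with the same edit-budget bookkeeping. The only cosmetic difference is that the paper fixes $S$ of size $\lfloor\sqrt n\rfloor+1$ in the small case rather than size exactly $k$, which changes nothing substantive.
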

\begin{proof}
  Let $(G,k)$ be the input, with $n = |V(G)|$.
  If $k \leqslant \sqrt n + 1$, fix any subset $S \subseteq V(G)$ of size $\lfloor \sqrt n \rfloor + 1$.
  Build the graph $H$ with vertex set $V(G)$ such that $H[S]$ is an independent set, and $H - S = G - S$.
  By construction, $\distd(G,H) \leqslant \sqrt n$.
  We can thus output $((H,k),S)$.

  If instead $k > \sqrt n + 1$, arbitrarily partition $V(G)$ into $k-1$ parts $V_1, \ldots, V_{k-1}$ of size $\lfloor \frac{n}{k-1} \rfloor$ or $\lceil \frac{n}{k-1} \rceil$.
  Build the graph $H$, obtained from $G$ by turning each $V_i$ into a~clique.
  As $|V_i|-1 \leqslant \sqrt n$, it holds that $\distd(G,H) \leqslant \sqrt n$.
  The partition into $k-1$ cliques $V_1, \ldots, V_{k-1}$ in $H$ implies that $\alpha(H) \leqslant k-1$.
  We can thus output $((H,k),\nil)$.
\end{proof}

Perhaps surprisingly, the easy scheme of~\cref{thm:algo-is} is essentially best possible.
Using the known strong inapproximability of \textsc{Max Independent Set}, we show that the sidestepping problem becomes \NP-hard as soon as the radius gets only slightly smaller.

\begin{theorem}\label{thm:hardness-is}
  For any $\beta > 0$, \rel$($\textsc{Independent Set}$, \distd, n^{\frac{1}{2}-\beta})$ is \NP-hard.
\end{theorem}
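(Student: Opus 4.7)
I would prove \cref{thm:hardness-is} by a~robust reduction from a~gap version of \textsc{Max Independent Set}. By the Håstad--Zuckerman inapproximability results~\cite{Hastad96,Zuckerman07}, for every fixed $\varepsilon > 0$ it is \NP-hard to decide, given a~$\nu$-vertex graph $G$, between the promise cases $\alpha(G) \geq \nu^{1-\varepsilon}$ and $\alpha(G) \leq \nu^{\varepsilon}$. I would fix $\varepsilon := \beta/2$, and from such a~gap instance $G$ simply form the input $(G, k)$ for \rel$($\textsc{Independent Set}$, \distd, n^{1/2-\beta})$ with $n := \nu$ and $k := \lceil \nu^{1/2} \rceil$; the reduction is essentially the identity, hence trivially polynomial.

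The heart of the argument is the following \emph{edit-robustness} inequality: for any two graphs $G, H$ on the same vertex set with $\distd(G, H) \leq d$,
\[
\frac{\alpha(G)}{d+1} \;\leq\; \alpha(H) \;\leq\; (d+1)\,\alpha(G).
\]
Setting $J := G \triangle H$ (which has maximum degree at~most $d$), for any independent set $S$ of $G$, the induced subgraph $H[S]$ coincides with $J[S]$, so it has maximum degree at~most~$d$, and therefore contains an independent set of size at~least $|S|/(d+1)$; this yields the left inequality, and the right one is obtained by swapping the roles of $G$ and $H$.

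Plugging $d := \lfloor n^{1/2-\beta} \rfloor$ into the inequality: on a~YES gap instance (where $\alpha(G) \geq \nu^{1-\beta/2}$), every admissible output graph $H$ satisfies $\alpha(H) \geq \nu^{1-\beta/2}/(d+1) = \Omega(\nu^{1/2 + \beta/2}) > k$, forcing the sidestep algorithm to output $\true$; on a~NO gap instance ($\alpha(G) \leq \nu^{\beta/2}$), every admissible $H$ has $\alpha(H) \leq (d+1)\nu^{\beta/2} = O(\nu^{1/2-\beta/2}) < k$, forcing the output~$\false$. Thus the boolean component of the sidestep answer decides the Håstad--Zuckerman gap problem, proving \NP-hardness; moreover, if the sidestep algorithm returns a~witness independent set, the same witness testifies that $\alpha(H) \geq k$ and therefore $\alpha(G) \geq k/(d+1)$, so the argument works for the function variant as well.

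The main subtlety is the robustness inequality; it is clean precisely because $\distd$ controls the maximum degree of the symmetric difference, so the bound $|S|/(\Delta+1)$ for the independence number of a~graph of maximum degree $\Delta$ applies directly. Numerically, the reduction goes through because the Håstad--Zuckerman gap $\nu^{1-2\varepsilon}$ strictly dominates the multiplicative slack $(d+1)^2 = O(\nu^{1-2\beta})$ introduced by allowed edits, which is why any $\varepsilon < \beta$ (such as $\varepsilon = \beta/2$) suffices; pushing $d$ beyond $n^{1/2}$, as the matching algorithm of~\cref{thm:algo-is} shows, destroys this slack and hence tractability appears.
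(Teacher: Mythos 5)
Your proof is correct and follows essentially the same route as the paper's: the same Håstad--Zuckerman gap instance, the same choice $\varepsilon = \beta/2$, the same threshold $k \approx \sqrt n$, and the same key observation that a graph of maximum-degree-$d$ symmetric difference perturbs the independence number by at most a factor of $d+1$. The only (cosmetic) difference is that you package this observation as a standalone two-sided robustness inequality $\alpha(G)/(d+1) \leq \alpha(H) \leq (d+1)\alpha(G)$, whereas the paper applies the bound directly inside each branch of the case analysis.
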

\begin{proof}
  For any $\varepsilon \in (0,1/2]$, given $n$-vertex input graphs $G$ satisfying either $\alpha(G) \leqslant n^\varepsilon$ or $\alpha(G) \geqslant n^{1-\varepsilon}$ it is \NP-hard to tell which of the two outcomes holds~\cite{Hastad96,Zuckerman07}.

  Consider inputs $(G,\lfloor \sqrt n \rfloor)$ of \rel$($\textsc{Independent Set}$, \distd,   n^{\frac{1}{2}-\beta})$ with $n = |V(G)|$ and $G$ satisfies the above promise.
  We show that if these instances could be solved in polynomial time, this would contradict, under \P\,$\neq$\,\NP, the latter hardness-of-approximation result with $\varepsilon := \beta/2$.
  Assume that the supposed algorithm returns $((H, \lfloor \sqrt n \rfloor),S)$ such that $\distd(G,H) \leqslant n^{\frac{1}{2}-\beta}$ and $S$ is an independent set of~$H$ of size $\lfloor \sqrt n \rfloor$.
  An important observation is that $G[S]$ has maximum degree at~most $n^{\frac{1}{2}-\beta}$.
  Hence, in polynomial time, one can compute a~subset $S' \subseteq S$ such that $S'$ is an independent set of~$G$ of size at~least
  \[\frac{\lfloor \sqrt n \rfloor}{n^{\frac{1}{2}-\beta}+1} \geqslant \frac{n^\beta}{2} > n^\varepsilon.\]
  For the last inequality to hold, we assume that $n > 2^{1/\varepsilon}$, which we can safely do, as otherwise the instance can be solved by the brute-force approach.
  Therefore $S'$ witnesses that $\alpha(G) \leqslant n^\varepsilon$ does not hold.
  We can thus report that $\alpha(G) \geqslant n^{1-\varepsilon}$.
  
  Now we assume instead that the algorithm returns $((H, \lfloor \sqrt n \rfloor),\nil)$ such that $\distd(G,H) \leqslant n^{\frac{1}{2}-\beta}$ and $\alpha(H) < \sqrt n$.
  We claim that $\alpha(G) < n^{1-\varepsilon}$.
  Indeed, symmetrically to what has been previously observed, if $G$ had an independent set $S$ of size at~least $n^{1-\frac{\beta}{2}}$, then a~$(n^{\frac{1}{2}-\beta}+1)^{-1}$ fraction of $S$ would form an independent set in~$H$.
  Again, assuming that $n > 2^{1/\varepsilon}$, this would contradict $\alpha(H) < \sqrt n$.
  Hence, we can report that the outcome $\alpha(G) \leqslant n^\varepsilon$ holds.
\end{proof}

We can adapt the proof of~\cref{thm:algo-is} to yield an essentially optimal upper bound for distance~$\diste$.
For that, we simply set the demarcation on the independent-set size at $n^{\frac{2}{3}}$ rather than $\sqrt n$.

\begin{theorem}\label{thm:algo-is-diste}
  \rel$($\textsc{Independent Set}$, \diste, n^{\frac{4}{3}})$ is in~\FP.
\end{theorem}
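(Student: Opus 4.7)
The plan is to mimic the two-case scheme of~\cref{thm:algo-is}, but with the demarcation threshold shifted from $\sqrt n$ to $n^{\frac{2}{3}}$, which is the ``sweet spot'' where both cases consume at~most $n^{\frac{4}{3}}$ edge editions. Let $(G,k)$ be the input with $n=|V(G)|$.

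\emph{Small-$k$ case.} Assume $k \leqslant n^{\frac{2}{3}} + 1$. Pick any vertex subset $S \subseteq V(G)$ with $|S| = \lfloor n^{\frac{2}{3}} \rfloor + 1 \geqslant k$, and let $H$ be the graph with vertex set $V(G)$ obtained from $G$ by removing every edge of $G[S]$ (and keeping $H - S = G - S$). Then $S$ is an independent set of $H$, so any $k$-subset of $S$ is a~witness. The number of edited edges is $|E(G[S])| \leqslant \binom{|S|}{2} \leqslant \frac{(n^{\frac{2}{3}}+1) n^{\frac{2}{3}}}{2} \leqslant n^{\frac{4}{3}}$, confirming $\diste(G,H) \leqslant n^{\frac{4}{3}}$.

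\emph{Large-$k$ case.} Assume $k > n^{\frac{2}{3}} + 1$. Arbitrarily partition $V(G)$ into $k-1$ parts $V_1, \ldots, V_{k-1}$ of size either $\lfloor \frac{n}{k-1} \rfloor$ or $\lceil \frac{n}{k-1} \rceil$, and let $H$ be obtained from $G$ by turning each $V_i$ into a~clique. By pigeonhole, every set of $k$ vertices meets some $V_i$ in two vertices, so $\alpha(H) \leqslant k-1 < k$ and we may output $((H,k),\nil)$. The number of edited edges is at most $\sum_{i=1}^{k-1} \binom{|V_i|}{2} \leqslant \frac{1}{2} \max_i |V_i| \cdot \sum_i |V_i| \leqslant \frac{1}{2}\left(\frac{n}{k-1}+1\right)n = \frac{n^2}{2(k-1)} + \frac{n}{2}$. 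Since $k-1 > n^{\frac{2}{3}}$, the first summand is below $\frac{n^{\frac{4}{3}}}{2}$, and the second is $\leqslant \frac{n^{\frac{4}{3}}}{2}$ for $n \geqslant 1$, so the total is at most $n^{\frac{4}{3}}$.

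\emph{Obstacle.} There is essentially no combinatorial obstacle here; the entire argument is the same win--win planting as in~\cref{thm:algo-is}, and the only care needed is the arithmetic justifying that $n^{\frac{2}{3}}$ is the correct pivot so that the quadratic budget $\binom{|S|}{2}$ in the small-$k$ case and the ``inverse-linear'' budget $\frac{n^2}{k-1}$ in the large-$k$ case both balance out at~$n^{\frac{4}{3}}$. One minor point worth double-checking is the rounding in the partition (ensuring $\lceil n/(k-1) \rceil \leqslant n/(k-1)+1$ is loose enough to absorb the additive $\frac{n}{2}$ slack), which is handled by the bound above.
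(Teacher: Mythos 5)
Your proposal is correct and takes essentially the same win--win approach as the paper, with the same pivot at $n^{2/3}$. The only cosmetic difference is in the large-$k$ case: the paper partitions $V(G)$ into $\lfloor n^{2/3}\rfloor$ cliques of size roughly $n^{1/3}$ (which suffices since $\lfloor n^{2/3}\rfloor < k$), whereas you partition into $k-1$ cliques, more closely mirroring the $\distd$ algorithm; both bounds on the number of edited edges check out.
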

\begin{proof}
  Let $(G,k)$ be the input, with $n = |V(G)|$.
  If $k \leqslant n^{\frac{2}{3}}$, fix any subset $S \subseteq V(G)$ of size~$n^{\frac{2}{3}}$.
  Remove the at~most ${n^{2/3} \choose 2} \leqslant n^{\frac{4}{3}}$ edges of $G[S]$ from $G$, and let $H$ be the obtained graph.
  We can output $((H,k),S)$ since $\diste(G,H) \leqslant n^{\frac{4}{3}}$ and $S$ is an independent set of size~$k$ in~$H$.

  If instead $k > n^{\frac{2}{3}}$, arbitrarily partition $V(G)$ into $\lfloor n^{\frac{2}{3}} \rfloor$ parts $V_1, \ldots, V_{\lfloor n^{2/3} \rfloor}$ each of size at~most $\lceil n^{\frac{1}{3}} \rceil +1$.
  Turn each $V_i$ into a~clique, and call the obtained graph~$H$.
  This adds to~$G$ at most $\lfloor n^{2/3} \rfloor \cdot {\lceil n^{1/3} \rceil+1 \choose 2} \leqslant n^{\frac{4}{3}}$ edges, for large enough~$n$.
  Thus we can output $((H,k),\nil)$.
\end{proof}

Finally we show the counterpart of~\cref{thm:hardness-is} for $\diste$.

\begin{theorem}\label{thm:hardness-is-diste}
  For any $\beta > 0$, \rel$($\textsc{Independent Set}$, \diste, n^{\frac{4}{3}-\beta})$ is \NP-hard.
\end{theorem}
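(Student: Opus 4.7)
The plan is to mirror the strategy of~\cref{thm:hardness-is}, replacing the maximum-degree bound used for $\distd$ by an average-degree (Turán-type) bound suited to $\diste$, and recalibrating the IS size at which we query the algorithm. We again reduce from the gap problem of Håstad and Zuckerman~\cite{Hastad96,Zuckerman07}: for any $\varepsilon \in (0,1/2]$, distinguishing $n$-vertex graphs with $\alpha(G) \leqslant n^\varepsilon$ from those with $\alpha(G) \geqslant n^{1-\varepsilon}$ is \NP-hard. Given $\beta > 0$, we will set $\varepsilon := \beta/3$ (anything strictly less than $\beta/2$ works), assume large enough $n$ (otherwise brute-force), and feed the input $(G, k)$ with $k := \lfloor n^{2/3} \rfloor$ to a~hypothetical polynomial-time algorithm for \rel$(\textsc{Independent Set}, \diste, n^{4/3 - \beta})$.

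Since at~distance $\diste$ at~most $n^{4/3-\beta}$ from~$G$, the two cases in the output are analyzed with the elementary lower bound that any graph on $m$ vertices with $e$ edges contains an independent set of size at~least $m/(2e/m + 1)$. First I would handle the case where the algorithm returns an independent set $S$ of size $k$ in a~graph $H$ with $\diste(G,H) \leqslant n^{4/3-\beta}$. Since $H[S]$ is edgeless, $E(G[S]) \subseteq E(G) \triangle E(H)$, so $|E(G[S])| \leqslant n^{4/3-\beta}$. The lower bound above yields an independent set of~$G$ of size at~least
\[
\frac{k}{2 n^{4/3-\beta}/k + 1} \;\geqslant\; \frac{n^{2/3}}{2 n^{2/3-\beta} + 1} \;\geqslant\; \frac{n^{\beta}}{3},
\]
which for large enough $n$ exceeds $n^\varepsilon = n^{\beta/3}$, allowing us to safely report $\alpha(G) \geqslant n^{1-\varepsilon}$.

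Next I would handle the case where the algorithm returns $((H,k), \nil)$, meaning $\alpha(H) < k$. Suppose for contradiction $G$ has an independent set $T$ of size $\lceil n^{1 - \varepsilon} \rceil$. Then $E(H[T]) \subseteq E(G) \triangle E(H)$, so $|E(H[T])| \leqslant n^{4/3-\beta}$. Applying the same lower bound inside $H[T]$ gives an independent set of~$H$ of size at~least
\[
\frac{n^{1-\varepsilon}}{2 n^{4/3 - \beta}/n^{1-\varepsilon} + 1} \;=\; \frac{n^{1-\varepsilon}}{2 n^{1/3 - \beta + \varepsilon} + 1} \;\geqslant\; \frac{n^{2/3 + \beta - 2\varepsilon}}{3}.
\]
With $\varepsilon = \beta/3$, the exponent is $2/3 + \beta/3 > 2/3$, so this contradicts $\alpha(H) < k$ for large enough~$n$. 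Hence $\alpha(G) < n^{1 - \varepsilon}$ and we can report the outcome $\alpha(G) \leqslant n^\varepsilon$.

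The only delicate point is the choice of $\varepsilon$ as a~function of $\beta$: it must be small enough to beat the losses from the Turán bound on both sides, and the constraint $\varepsilon < \beta/2$ makes both estimates go through simultaneously. Everything else, including the conversion of the $\alpha(H)$-witness back into an independent set of~$G$ by an averaging argument, is routine once the correct threshold $k = \lfloor n^{2/3} \rfloor$ is identified, chosen to balance the two cases against the budget $n^{4/3-\beta}$.
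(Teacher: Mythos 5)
Your proposal is correct and follows essentially the same strategy as the paper's proof: reduce from the H{\aa}stad--Zuckerman gap problem, query the hypothetical algorithm at threshold $k = \lfloor n^{2/3} \rfloor$, and convert an independent set of one graph into an independent set of the nearby graph by exploiting that the symmetric difference has at most $n^{4/3-\beta}$ edges. The one substantive difference is the sparse-graph lemma used: you invoke the Caro--Wei (average-degree) bound $\alpha \geq m/(\bar d + 1)$ directly, whereas the paper uses the $\lceil\sqrt{2m}\rceil$-degeneracy bound in the first case and, in the second case, a slightly ad hoc argument that first discards the (few) high-degree vertices of $H[S]$ and then applies the max-degree bound to what remains; your single-lemma route is a bit cleaner and is why you can afford $\varepsilon = \beta/3$ where the paper takes $\varepsilon = \beta/4$. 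Two small nits: your claim that ``anything strictly less than $\beta/2$ works'' breaks down once $\beta \geq 2/3$ (the second case also needs $\varepsilon < 1/3$, and the gap theorem requires $\varepsilon \leq 1/2$), and the displayed inequality $\geq n^\beta/3$ in the first case actually fails when $\beta > 2/3$ (the true bound there is $\approx n^{2/3}/3$, which still exceeds $n^\varepsilon$); neither affects the conclusion for the relevant small-$\beta$ range, and the large-$\beta$ regime follows by monotonicity of the radius, an issue the paper's own exposition also glosses over.
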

\begin{proof}
  As in~\cref{thm:hardness-is}, we use the hardness of distinguishing, for any $\varepsilon \in (0,1/2]$, $n$-vertex graphs $G$ such that $\alpha(G) \leqslant n^\varepsilon$ from those such that $\alpha(G) \geqslant n^{1-\varepsilon}$.

  Consider inputs $(G,\lfloor n^{2/3} \rfloor)$ of \rel$($\textsc{Independent Set}$, \diste, n^{\frac{4}{3}-\beta})$ with $n = |V(G)|$ and $G$ satisfies one of the previous outcomes.
  We show that solving these instances in polynomial time would contradict, unless \P\,$=$\,\NP, the above hardness-of-approximation result with $\varepsilon := \beta/4$.
  Assume that the supposed algorithm returns $((H, \lfloor n^{2/3} \rfloor),S)$ such that $\diste(G,H) \leqslant n^{\frac{4}{3}-\beta}$ and $S$ is an independent set of~$H$ of size $\lfloor n^{2/3} \rfloor$.

  We observe that $G[S]$ has at~most $n^{\frac{4}{3}-\beta}$ edges.
  Hence every (induced) subgraph of $G[S]$ has a~vertex of degree at~most~$2 (n^{\frac{4}{3}-\beta})^{1/2}=2 n^{\frac{2}{3}-2 \varepsilon}$.
  Thus $G[S]$ has an independent set of size at~least \[\frac{\lfloor n^{2/3} \rfloor}{2 n^{\frac{2}{3} - 2 \varepsilon} + 1} > n^\varepsilon,\]
  for $n$ at~least a~fixed function of~$\varepsilon$.
  Therefore $\alpha(G) \leqslant n^\varepsilon$ cannot hold, so we can conclude that $\alpha(G) \geqslant n^{1-\varepsilon}$.

  We now assume that the algorithm returns instead $((H, \lfloor n^{2/3} \rfloor),\nil)$ such that $\diste(G,H) \leqslant n^{\frac{4}{3}-\beta}$ and $\alpha(H) < \lfloor n^{2/3} \rfloor$.
  We claim that $\alpha(G) < n^{1-\varepsilon}$.
  Suppose for the sake of contradiction that $G$ has an independent set $S$ of size $n^{1-\frac{\beta}{4}}$.
  Note that $H[S]$ has at~most~$n^{\frac{4}{3}-\beta}$ edges.
  In particular, $H[S]$ has at~most~$n^{1-2 \varepsilon}$ vertices of degree at~least~$2 n^{\frac{1}{3}-2 \varepsilon}$, thus more than $n^{1- \varepsilon} - n^{1-2 \varepsilon}$ vertices of degree less than~$2 n^{\frac{1}{3}-2 \varepsilon}$.
  Therefore $H[S]$ has an independent set of size at~least \[\frac{n^{1-\varepsilon}-n^{1-2\varepsilon}}{2 n^{\frac{1}{3}-2 \varepsilon} + 1} > \frac{1}{10} n^{\frac{2}{3} + \varepsilon} > n^{2/3},\]
  for some large enough~$n$.
  This would contradict that $\alpha(H) < \lfloor n^{2/3} \rfloor$.
  Hence, we can conclude that $\alpha(G) \leqslant n^\varepsilon$ holds.
\end{proof}

We now introduce the necessary formalism to show that all the previous results of this section carry over to \textsc{Clique} and \textsc{Vertex Cover}.

A~\emph{polynomial-time isomorphism} $\varphi$ from $\Pi$ to $\Pi'$ (where $\Pi$ and $\Pi'$ are not necessarily decision problems) is a~bijective map from the inputs of $\Pi$ to the inputs of $\Pi'$ that is a~polynomial-time reduction, with inverse $\varphi^{-1}$ also computable in polynomial time.\footnote{For decision problems, this notion is at the core of the Berman--Hartmanis conjecture stating that there is a~polynomial-time isomorphism between any two \NP-complete languages.}
Specifically when $\Pi$ and $\Pi'$ are function problems, the bijection $\varphi$ is accompanied by a~polynomial-time function $\psi$, called \emph{output pullback} (of $\varphi$), such that for any $x \in \sol_{\Pi'}(\varphi(I))$, it holds that $\psi(x) \in \sol_\Pi(I)$.
Note that if instead $\Pi$ and $\Pi'$ are decision problems, the function $\psi$ would simply map $\true$ to $\true$ and $\false$ to $\false$, and can remain implicit.
We say that $\varphi$ is \emph{length-preserving} if for every input $I$ of $\Pi$, $|\varphi(I)|=|I|$.

\begin{theorem}\label{thm:lpi}
  Let $\Pi$ and $\Pi'$ be two problems, $\varphi$ be a~length-preserving polynomial-time isomorphism from $\Pi$ to~$\Pi'$, $\psi$ be its output pullback, and $\dist$ (resp.~$\dist'$) be a~map whose domain is the set of pairs of inputs of~$\Pi$ (resp.~$\Pi'$) such that for every pair of inputs $I, J$ of~$\Pi$, $\dist(I,J) = \dist'(\varphi(I),\varphi(J))$.

  For any map $d$, if \rel$(\Pi', \dist', d)$ is in \FP, then \rel$(\Pi, \dist, d)$ is in \FP, and if \rel$(\Pi, \dist, d)$ is \NP-hard, then \rel$(\Pi', \dist', d)$ is \NP-hard.
\end{theorem}

\begin{proof}
  We describe a~Turing reduction from \rel$(\Pi, \dist, d)$ to \rel$(\Pi', \dist', d)$.
  
  On input $I$ of $\Pi$, compute the instance $I' := \varphi(I)$ of~$\Pi'$ in polynomial time.
  Run the polynomial-time algorithm for \rel$(\Pi', \dist', d)$ on input~$I'$.
  It outputs $(J',\Pi'(J'))$ with $\dist'(I',J') \leqslant d(|I'|)$.
  Now compute $J := \varphi^{-1}(J')$, and $x := \psi(\Pi'(J'))$.
  Finally output~$(J,x)$.
  By definition of the output pullback, $x$ is indeed a~correct solution to $\Pi$ on~$J$.
  Furthermore, $\dist(I,J) = \dist'(\varphi(I),\varphi(J)) = \dist'(I',J') \leqslant d(|I'|) = d(|\varphi(I)|) = d(|I|)$, since $\varphi$ is length-preserving.
\end{proof}

We now use~\cref{thm:lpi} for \textsc{Clique} and \textsc{Vertex Cover}.
Similarly to \textsc{Independent Set}, \textsc{Clique} (resp.~\textsc{Vertex Cover}) inputs a~pair $(G,k)$ where $G$ is a~graph and $k \in [0,|V(G)|]$, and a~clique of~$G$ (resp.~vertex cover of~$G$) of size $k$ is expected as output, and $\nil$ if none exists. 

\begin{theorem}\label{thm:clique-vc}
  \textsc{Clique} and \textsc{Vertex Cover} also have hardness radius $n^{\frac{1}{2}-o(1)}$ for $\distd$, and $n^{\frac{4}{3}-o(1)}$ for $\diste$.
\end{theorem}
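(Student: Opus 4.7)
The plan is to invoke \cref{thm:lpi} twice for each of the two target problems, using a natural length-preserving polynomial-time isomorphism tying it back to \textsc{Independent Set}. The four required estimates for \textsc{Clique} and \textsc{Vertex Cover} (tractability and hardness, under both $\distd$ and $\diste$) then follow directly from \cref{thm:algo-is,thm:hardness-is,thm:algo-is-diste,thm:hardness-is-diste}.

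For \textsc{Clique}, I would use $\varphi_C : (G,k) \mapsto (\overline G, k)$. This is self-inverse, clearly length-preserving and polynomial-time, and the output pullback $\psi$ can be taken as the identity on subsets of the vertex set, since $S$ is an independent set of $G$ if and only if $S$ is a clique of~$\overline G$. Preservation of both $\distd$ and $\diste$ follows from the elementary identity $E(\overline G)\triangle E(\overline H) = E(G) \triangle E(H)$ whenever $V(G)=V(H)$: complementing flips every potential edge exactly once, so the symmetric difference is unchanged. Applying \cref{thm:lpi} in one direction transfers the polynomial-time algorithms of \cref{thm:algo-is,thm:algo-is-diste}; applying it in the reverse direction (using that $\varphi_C$ is self-inverse) transfers, via contrapositive, the hardness of \cref{thm:hardness-is,thm:hardness-is-diste}.

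For \textsc{Vertex Cover}, I would use $\varphi_{VC} : (G,k) \mapsto (G, |V(G)|-k)$, which is again self-inverse, length-preserving, and polynomial-time, with output pullback $\psi(C) := V(G) \setminus C$ (a set $S$ is independent in~$G$ if and only if $V(G) \setminus S$ is a vertex cover of~$G$, and sizes are related by $|V(G) \setminus S| = |V(G)|-|S|$, so thresholds line up). Distance preservation is immediate since $\varphi_{VC}$ does not touch the graph; the only bookkeeping point is that the involution $k \mapsto |V(G)|-k$ maps matching thresholds to matching ones (and non-matching to non-matching), so finite distances remain finite and infinite distances remain infinite. Applying \cref{thm:lpi} in both directions once more transfers all four estimates from \textsc{Independent Set} to \textsc{Vertex Cover}.

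No serious obstacle is foreseen: the argument is entirely mechanical once the two involutions are in place, and the only subtlety---the interaction between the integer threshold and the $\infty$-convention of $\distd$ and $\diste$---is handled automatically because both isomorphisms respect threshold matching.
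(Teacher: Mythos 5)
Your proposal matches the paper's proof essentially verbatim: it uses the same two length-preserving involutions $\varphi_C : (G,k) \mapsto (\overline G, k)$ and $\varphi_{VC} : (G,k) \mapsto (G, |V(G)|-k)$, the same output pullbacks (identity and complementation within $V(G)$), verifies distance preservation by the same elementary identity $\overline G \triangle \overline H = G \triangle H$, and invokes \cref{thm:lpi} in both directions to transfer the four bounds from \cref{thm:algo-is,thm:algo-is-diste,thm:hardness-is,thm:hardness-is-diste}. No gaps; the argument is correct and is the paper's argument.
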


\begin{proof}
  We start with the case of~\textsc{Clique}.
  The map $\varphi: (G,k) \mapsto (\overline G, k)$, where $\overline G$ is the complement of graph~$G$ and $k$ is a~non-negative integer, is a~length-preserving polynomial-time isomorphism from \textsc{Clique} to \textsc{Independent Set} and from \textsc{Independent Set} to \textsc{Clique} (as $\varphi$ is an involution), and output pullback $\psi$ defined as the identity map.
  Indeed $|V(G)|=|V(\overline G)|$, $\overline{\overline G}=G$, and an independent set in $\overline G$ is a~clique in~$G$.
  Moreover $G \triangle H = \overline G \triangle \overline H$, so $\dist((G,k),(H,k')) = \dist(\varphi(G,k),\varphi(H,k'))$ for $\dist \in \{\diste, \distd\}$.

  Thus the assumptions of~\cref{thm:lpi} are met by $\varphi$ and $\psi$ for $$(\Pi,\Pi') \in \{(\textsc{Clique}, \textsc{Independent Set}), (\textsc{Independent Set},\textsc{Clique})\}$$ and $\dist=\dist' \in \{\diste,\distd\}$.
  Hence~\cref{thm:lpi,thm:algo-is,thm:algo-is-diste} imply that \rel$($\textsc{Clique}$,$ $\distd, n^{\frac{1}{2}})$ and \rel$($\textsc{Clique}$, \diste, n^{\frac{4}{3}})$ are in~\FP.
  Now considering the reduction from \textsc{Independent Set} to \textsc{Clique}, \cref{thm:lpi,thm:hardness-is,thm:hardness-is-diste} imply that, for any $\beta > 0$, \rel$($\textsc{Clique}$,$ $\distd, n^{\frac{1}{2}-\beta})$ and \rel$($\textsc{Clique}$, \diste, n^{\frac{4}{3}-\beta})$ are \NP-hard.

  We now deal with~\textsc{Vertex Cover}.
  We define the involutive map $\varphi: (G,k) \mapsto (G, |V(G)|-k)$.
  The map $\varphi$ is a~length-preserving polynomial-time isomorphism from \textsc{Vertex Cover} to \textsc{Independent Set} and from \textsc{Independent Set} to \textsc{Vertex Cover}, with output pullback $\psi$ defined as the involution $X \mapsto V(G) \setminus X$.
  Indeed $X$ is a~vertex cover of $G$ if and only if $V(G) \setminus X$ is an independent set of~$G$.
  We have that $$\dist((G,k),(H,k')) = \dist((G,|V(G)|-k),(H,|V(H)|-k')) = \dist(\varphi(G,k),\varphi(H,k'))$$ for $\dist \in \{\diste, \distd\}$.
  This holds since these distances are infinite if $V(G) \neq V(H)$ or $k \neq k'$, and are equal to $\dist(G,H)$ otherwise.

  Thus \cref{thm:lpi,thm:algo-is,thm:algo-is-diste} imply that \rel$($\textsc{Vertex Cover}$,$ $\distd, n^{\frac{1}{2}})$ and \rel$($\textsc{Vertex Cover}$, \diste, n^{\frac{4}{3}})$ are in~\FP, whereas \cref{thm:lpi,thm:hardness-is,thm:hardness-is-diste} imply that, for any $\beta > 0$, \rel$($\textsc{Vertex Cover}$,$ $\distd, n^{\frac{1}{2}-\beta})$ and \rel$($\textsc{Vertex Cover}$, \diste, n^{\frac{4}{3}-\beta})$ are \NP-hard.
\end{proof}

We finish this section with the case of \textsc{Coloring} and \textsc{Clique Cover}.

We first observe that \cref{thm:algo-is,thm:algo-is-diste} also hold for \textsc{Coloring}, which takes as input a~graph $G$ and an integer~$k \in [0,|V(G)|]$, and asks for a~partition of $V(G)$ into $k$ independent sets if one exists, and to output $\nil$ otherwise.
To do so, one can adapt these theorems for the \textsc{Clique Cover} problem, which boils down to \textsc{Coloring} in the complement graph, and invoke~\cref{thm:lpi}.
Indeed, in the algorithms of \cref{thm:algo-is,thm:algo-is-diste}, the used witnesses of small independence number (negative instances for \textsc{Independent Set}) are vertex-partitions into cliques (positive instances for \textsc{Clique Cover}), whereas a~large independent set (positive instances for \textsc{Independent Set}) is always a~witness that the graph needs a~large number of cliques to be vertex-partitioned (negative instances for \textsc{Clique Cover}).

\begin{theorem}\label{thm:algo-coloring}
  For any $\Pi \in \{\textsc{Coloring}, \textsc{Clique Cover}\}$, \rel$(\Pi, \distd, n^{\frac{1}{2}})$ and \rel$(\Pi, \diste, n^{\frac{4}{3}})$ are in~\FP.
\end{theorem}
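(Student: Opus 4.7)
The plan is to first prove the statements for \textsc{Clique Cover} by directly mirroring the two gadgets used in \cref{thm:algo-is,thm:algo-is-diste}, exploiting the duality noted in the paragraph preceding the theorem, and then to transfer them to \textsc{Coloring} via \cref{thm:lpi} applied to the complement map. The key observation is that the two gadgets play \emph{swapped} roles: planting an independent set $S$ with $|S|>k$ produces a negative instance of \textsc{Clique Cover} with threshold $k$ (no clique can contain two vertices of $S$, so $V(H)$ cannot be covered by $k$ cliques), while partitioning $V(G)$ into exactly $k$ cliques yields a direct positive certificate.

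For $\distd$ with budget $\sqrt{n}$, on input $(G,k)$ with $n=|V(G)|$, I would branch on whether $k \leqslant \lfloor\sqrt{n}\rfloor$ or $k > \lfloor\sqrt{n}\rfloor$. In the first case, plant an independent set of size $k+1$ on an arbitrary $(k+1)$-vertex subset of $V(G)$: the max-degree of $G \triangle H$ is at~most $k \leqslant \sqrt{n}$, and one outputs $((H,k),\nil)$. In the second case, arbitrarily partition $V(G)$ into $k$ parts of size at~most $\lceil n/k \rceil \leqslant \lceil\sqrt{n}\rceil$, turn each into a clique, and output $((H,k),S)$ with $S$ being this clique-partition. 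For $\diste$ with budget $n^{4/3}$, an analogous branching at $k=\lfloor n^{2/3}\rfloor$ works: on the small side, plant an independent set of size $k+1$ at a cost of at~most $\binom{\lfloor n^{2/3}\rfloor+1}{2}\leqslant n^{4/3}$ edge deletions; on the large side, partition $V(G)$ into $k$ cliques of size at~most $\lceil n^{1/3}\rceil+1$, adding at~most $k\binom{\lceil n^{1/3}\rceil+1}{2}\leqslant n^{4/3}$ edges for $n$ large enough (small $n$ being handled by brute force, as usual).

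To transfer from \textsc{Clique Cover} to \textsc{Coloring}, apply \cref{thm:lpi} with the map $\varphi:(G,k)\mapsto(\overline{G},k)$ from \textsc{Coloring} to \textsc{Clique Cover}, whose output pullback $\psi$ is the identity on partitions: a partition of $V(\overline{G})$ into $k$ cliques is precisely a partition of $V(G)$ into $k$ independent sets. The map $\varphi$ is an involutive, length-preserving polynomial-time isomorphism, and since $G\triangle H=\overline{G}\triangle\overline{H}$, both $\distd$ and $\diste$ are preserved by $\varphi$; hence the four \textsc{Clique Cover} results yield their four \textsc{Coloring} counterparts. The main point to watch is not a substantive obstacle but simple bookkeeping: aligning the integer cutoffs on $k$ (against $\sqrt{n}$ or $n^{2/3}$) so that the planted gadgets give the \emph{exact} required Clique-Cover decision within the stated budgets, with no genuinely new algorithmic idea beyond what is already present in \cref{thm:algo-is,thm:algo-is-diste,thm:lpi}.
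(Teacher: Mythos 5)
Your approach matches the paper's: observe that the two gadgets of \cref{thm:algo-is,thm:algo-is-diste} swap roles under the \textsc{Independent Set}/\textsc{Clique Cover} duality (a planted independent set of size $k+1$ witnesses that no clique cover of size $k$ exists, while a vertex-partition into at most $k$ cliques is itself a positive certificate), and then invoke \cref{thm:lpi} with the complement map to transfer from \textsc{Clique Cover} to \textsc{Coloring}. This is exactly what the paper does, so the proposal is essentially correct. There is, however, one bookkeeping slip in the $\diste$ branch with $k>\lfloor n^{2/3}\rfloor$: you partition $V(G)$ into $k$ cliques and bound the added edges by $k\cdot{\lceil n^{1/3}\rceil+1 \choose 2}$, but when $k$ is close to $n$ this quantity is about $n^{5/3}/2$, which exceeds the budget $n^{4/3}$. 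The fix is immediate and is the choice actually made in the paper's proof of \cref{thm:algo-is-diste}: partition into exactly $\lfloor n^{2/3}\rfloor$ cliques (this is still a cover by at most $k$ cliques, and can be padded with singletons to reach exactly $k$ parts), giving the valid bound $\lfloor n^{2/3}\rfloor\cdot{\lceil n^{1/3}\rceil+1 \choose 2}\leqslant n^{4/3}$. Alternatively, keep $k$ balanced parts but use the tight edge count $\sum_i {|V_i| \choose 2}$, which is roughly $n^2/(2k)\leqslant n^{4/3}/2$ for $k\geqslant n^{2/3}$.
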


\cref{thm:hardness-is,thm:hardness-is-diste} also hold for \textsc{Coloring} (and \textsc{Clique Cover}), but their proofs require a~bit more adjustments.

\begin{theorem}\label{thm:hardness-coloring}
For any $\beta > 0$, \rel$($\textsc{Coloring}$, \distd, n^{\frac{1}{2}-\beta})$ is \NP-hard.
\end{theorem}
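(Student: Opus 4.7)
The plan is to adapt the argument of \cref{thm:hardness-is}, replacing the inapproximability of the independence number by the analogous inapproximability of the chromatic number, which is also due to Zuckerman~\cite{Zuckerman07} (and in fact is an outcome of the same reduction chain as the hardness used in~\cref{thm:hardness-is}). Concretely, I would use that for any $\varepsilon \in (0,1/2]$, given an $n$-vertex graph $G$ satisfying either $\chi(G) \leqslant n^\varepsilon$ or $\chi(G) \geqslant n^{1-\varepsilon}$, it is \NP-hard to tell which outcome holds. I would fix $\varepsilon := \beta/3$ (any constant strictly below $\beta/2$ works), set the threshold $k := \lfloor \sqrt n \rfloor$, and feed the instance $(G,k)$ to the hypothetical polynomial-time algorithm $\mathcal A$ for \rel$($\textsc{Coloring}$, \distd, n^{\frac{1}{2}-\beta})$.

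The key observation, used in both cases, is the elementary fact already highlighted in~\cref{sec:prelim}: a~graph of maximum degree $\Delta$ is $(\Delta+1)$-colorable. In particular, if two graphs $G, H$ on the same vertex set satisfy $\distd(G,H) \leqslant n^{\frac{1}{2}-\beta}$, then any independent set of one becomes, in the other, a~graph of maximum degree at~most $n^{\frac{1}{2}-\beta}$, hence is properly $(n^{\frac{1}{2}-\beta}+1)$-colorable. Combining proper colorings of $G$ and of such ``corrective'' colorings yields $\chi(H) \leqslant \chi(G) \cdot (n^{\frac{1}{2}-\beta}+1)$, and symmetrically $\chi(G) \leqslant \chi(H) \cdot (n^{\frac{1}{2}-\beta}+1)$.

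Now I would distinguish the two possible outputs of~$\mathcal A$. If $\mathcal A$ returns a~positive instance $((H,k),\varphi)$ where $\varphi$ is a~proper $k$-coloring of~$H$, then by the inequality above $\chi(G) \leqslant k (n^{\frac{1}{2}-\beta}+1) \leqslant 2 n^{1-\beta}$, which for $n$ large enough is below $n^{1-\varepsilon} = n^{1-\beta/3}$; hence the only possible outcome is $\chi(G) \leqslant n^\varepsilon$. If instead $\mathcal A$ returns $((H,k),\nil)$ with $\chi(H) > k$, then assuming for contradiction that $\chi(G) \leqslant n^\varepsilon$ gives $\chi(H) \leqslant n^\varepsilon (n^{\frac{1}{2}-\beta}+1) \leqslant 2 n^{\frac{1}{2}+\varepsilon-\beta} = 2 n^{\frac{1}{2}-\frac{2\beta}{3}}$, which for $n$ large enough is at~most $\lfloor \sqrt n \rfloor = k$, a~contradiction; hence the only remaining outcome is $\chi(G) \geqslant n^{1-\varepsilon}$. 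In either case, we have identified in polynomial time which side of the gap $G$ lies in, contradicting \P~$\neq$~\NP. Small inputs (say $n$ below some explicit function of~$\beta$) are handled by brute force.

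The main obstacle I anticipate is essentially cosmetic: namely, pinning down the exact chromatic-number inapproximability statement we need from~\cite{Zuckerman07} (and checking that the assumed hardness form matches what is stated there), together with verifying that the chosen $\varepsilon = \beta/3$ indeed makes both asymptotic inequalities above strict for all sufficiently large~$n$, so that the gap between $k$ and the produced upper bounds on $\chi(H)$ and $\chi(G)$ is non-empty. No new structural idea beyond the greedy $(\Delta+1)$-coloring and the scheme of~\cref{thm:hardness-is} appears necessary.
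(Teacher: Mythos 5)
Your proposal is correct and follows essentially the same route as the paper's proof: reduce from the promise gap version of chromatic number from~\cite{Zuckerman07}, set the threshold to $\lfloor\sqrt n\rfloor$, and use the $(\Delta+1)$-coloring of the symmetric-difference graph to get the product bound $\chi(G) \leqslant \chi(H)\cdot(n^{\frac12-\beta}+1)$ (and its symmetric counterpart), which is exactly the paper's partition-refinement argument stated multiplicatively. The only immaterial difference is your choice of $\varepsilon := \beta/3$ instead of the paper's $\beta/2$.
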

\begin{proof}
  For any $\varepsilon \in (0,1/2]$, given $n$-vertex input graphs $G$ satisfying either $\chi(G) \leqslant n^\varepsilon$ or $\chi(G) \geqslant n^{1-\varepsilon}$, it is \NP-hard to tell which of the two outcomes holds~\cite{Hastad96,Zuckerman07}.
 Consider inputs $(G,\lfloor \sqrt n \rfloor)$ of \rel$($\textsc{Coloring}$, \distd, n^{\frac{1}{2}-\beta})$ with $n = |V(G)|$ and $G$ satisfies that $\chi(G) \leqslant n^\varepsilon$ or $\chi(G) \geqslant n^{1-\varepsilon}$.
  We show that a~polynomial-time algorithm for these instances would contradict, unless \P\,$=$\,\NP, the above hardness-of-approximation result with $\varepsilon := \beta/2$.

  Assume that the supposed algorithm returns $((H, \lfloor \sqrt n \rfloor),\mathcal P)$ such that $\distd(G,H) \leqslant n^{\frac{1}{2}-\beta}$ and $\mathcal P$ is a~partition of~$V(H)$ into $\lfloor \sqrt n \rfloor$ independent sets of~$H$.
  As $G' := G-E(H)$ (the graph $G$ deprived of the edges of~$H$) has maximum degree at~most $n^{\frac{1}{2}-\beta}$, there is a~partition $\mathcal Q$ of $V(G') = V(H)$ into $n^{\frac{1}{2}-\beta}+1$ independent sets of~$G'$.
  Partition $\mathcal Q$ refines $\mathcal P$ (split each part of~$\mathcal P$ into its intersections with parts of $\mathcal Q$) into a~partition $\mathcal P'$ of $V(G)=V(H)$ with at~most~$\lfloor \sqrt n \rfloor \cdot (n^{\frac{1}{2}-\beta}+1)$ parts.
  By construction, each part of $\mathcal P'$ is an independent set of~$G$.
  Moreover $\lfloor \sqrt n \rfloor \cdot (n^{\frac{1}{2}-\beta}+1) \leqslant n^{1-\beta} + \sqrt n < n^{1-\varepsilon}$ for large enough~$n$.
  Thus we can report that $\chi(G) \leqslant n^\varepsilon$.
  
  Now we assume instead that the algorithm returns $((H, \lfloor \sqrt n \rfloor),\nil)$ such that $\distd(G,H) \leqslant n^{\frac{1}{2}-\beta}$ and $\chi(H) > \sqrt n$.
  We claim that $\chi(G) > n^\varepsilon$.
  Assume, for the sake of contradiction, that there is a~partition $\mathcal P$ of~$V(G) = V(H)$ into $\lfloor n^\varepsilon \rfloor$ independent sets of~$G$.
  As $H' := H - E(G)$ has maximum degree at~most $n^{\frac{1}{2}-\beta}$, there is a~partition $\mathcal Q$ of $V(H') = V(H)$ into $n^{\frac{1}{2}-\beta}+1$ independent sets of~$H'$.
  Then the refinement $\mathcal P'$ of~$\mathcal P$ by $\mathcal Q$ is a~partition of $V(H)$ into at~most $\lfloor n^\varepsilon \rfloor \cdot (n^{\frac{1}{2}-\beta}+1) \leqslant n^{\frac{1}{2}-\beta+\varepsilon}+ n^\varepsilon < \sqrt n$ independent sets of~$H$; a~contradiction to $\chi(H) > \sqrt n$.
  Hence, we can report that $\chi(G) \geqslant n^{1-\varepsilon}$ holds.
\end{proof}

We finally adapt~\cref{thm:hardness-is-diste} for \textsc{Coloring}.
We need the following folklore observation.

\begin{observation}\label{obs:m-to-degen}
  Every $m$-edge graph has chromatic number at~most~$\lceil \sqrt{2m} \rceil$.
\end{observation}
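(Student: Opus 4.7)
The plan is to invoke the classical \emph{degeneracy} argument. Recall that if every nonempty subgraph of $G$ contains a vertex of degree at most $d$, then $G$ admits a proper coloring with $d+1$ colors: iteratively remove a minimum-degree vertex, record the ordering, and then greedily color vertices in the reverse order, each vertex seeing at most $d$ already-colored neighbors. So it suffices to show that every graph $H$ with at~most $m$ edges (which will include every nonempty subgraph of the $m$-edge input graph) contains a vertex of degree at~most $\lceil \sqrt{2m} \rceil - 1$.

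To prove this, I would argue by contradiction: suppose $H$ has minimum degree $d \geq \lceil \sqrt{2m} \rceil$. Any vertex of $H$ then has $d$ pairwise distinct neighbors, so $|V(H)| \geq d+1$, and by the handshake lemma $|E(H)| \geq d(d+1)/2$. Setting $k := \lceil \sqrt{2m} \rceil$, we have $k \geq \sqrt{2m}$, so $k^2 \geq 2m$, and therefore
\[
|E(H)| \;\geq\; \frac{d(d+1)}{2} \;\geq\; \frac{k(k+1)}{2} \;>\; \frac{k^2}{2} \;\geq\; m,
\]
contradicting $|E(H)| \leqslant m$ (the strict inequality uses $k \geqslant 1$, which holds for $m \geqslant 1$; the case $m=0$ is vacuous once isolated vertices are accounted for).

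No real obstacle arises: the argument is entirely elementary, relying only on the handshake lemma and the standard link between degeneracy and chromatic number. The single point meriting attention is the arithmetic around the ceiling, handled by the observation that $k(k+1) > k^2 \geqslant 2m$ whenever $k \geqslant 1$, which is why the bound $\lceil \sqrt{2m} \rceil$ (rather than $\lfloor \sqrt{2m} \rfloor$) is tight enough even when $2m$ is a perfect square.
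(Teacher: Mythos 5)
Your proof is correct and takes essentially the same approach as the paper: both derive the minimum-degree bound from the handshake lemma together with the observation that a vertex of degree $d$ forces $n \geqslant d+1$, yielding the same arithmetic contradiction $m \geqslant \tfrac{1}{2}\lceil\sqrt{2m}\rceil(\lceil\sqrt{2m}\rceil+1) > m$. The only difference is that you spell out the degeneracy step (noting that subgraphs of an $m$-edge graph again have at most $m$ edges) where the paper leaves it implicit.
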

\begin{proof}
  It is in fact enough to show that any $m$-edge graph $G$ has a~vertex of degree at~most $\lceil \sqrt{2m} \rceil - 1$.
  If all the vertices of~$G$ have degree at~least $\lceil \sqrt{2m} \rceil$, then $m \geqslant \frac{1}{2} n \lceil \sqrt{2m} \rceil$.
  But also $n \geqslant \lceil \sqrt{2m} \rceil+1$ for a~vertex to possibly have $\lceil \sqrt{2m} \rceil$ neighbors.
  So $m \geqslant \frac{1}{2} \sqrt{2m} (\sqrt{2m}+1)> m$; a~contradiction.
\end{proof}

The main additional difference compared to the adaptation of~\cref{thm:hardness-coloring} is the use of the Cauchy--Schwarz inequality to get better upper bounds on some chromatic numbers.

\begin{theorem}\label{thm:hardness-coloring-diste}
  For any $\beta > 0$, \rel$($\textsc{Coloring}$, \diste, n^{\frac{4}{3}-\beta})$ is \NP-hard.
\end{theorem}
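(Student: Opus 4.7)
The plan is to follow the template of Theorem~\ref{thm:hardness-coloring}, again starting from the hardness of distinguishing $n$-vertex graphs~$G$ with $\chi(G) \leqslant n^\varepsilon$ from those with $\chi(G) \geqslant n^{1-\varepsilon}$~\cite{Hastad96,Zuckerman07}, with the threshold $T := \lfloor n^{2/3} \rfloor$ for \textsc{Coloring} (matching the threshold used for \textsc{Independent Set} under~$\diste$ in Theorem~\ref{thm:hardness-is-diste}). The novelty is that the degree-based refinement from Theorem~\ref{thm:hardness-coloring} is no longer affordable: an edge budget of $n^{4/3-\beta}$ does \emph{not} bound the maximum degree of $G \triangle H$. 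I would replace it by Observation~\ref{obs:m-to-degen} applied color-class by color-class, and glue the bounds together through the Cauchy--Schwarz inequality. Set $\varepsilon := \beta/4$ and consider the reduction sending an $n$-vertex gap instance $G$ to the instance $(G, T)$ of \rel$($\textsc{Coloring}$, \diste, n^{4/3-\beta})$.

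First, I would treat the \emph{positive} case: the supposed polynomial-time algorithm returns $((H, T), \mathcal P)$ with $\mathcal P = (P_1, \ldots, P_T)$ a proper $T$-coloring of~$H$ and $\diste(G,H) \leqslant n^{4/3-\beta}$. Let $m_i$ be the number of edges of $G[P_i]$; since each $P_i$ is independent in~$H$, these edges all lie in $G \triangle H$, so $\sum_i m_i \leqslant n^{4/3-\beta}$. By Observation~\ref{obs:m-to-degen}, $\chi(G[P_i]) \leqslant \sqrt{2 m_i}+1$, and refining $\mathcal P$ by proper colorings of the $G[P_i]$'s yields a proper coloring of~$G$ with $\sum_i (\sqrt{2 m_i}+1)$ colors. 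Cauchy--Schwarz gives
\[
  \sum_{i=1}^T \sqrt{2 m_i} \;\leqslant\; \sqrt{\,2\, T \sum_i m_i\,} \;\leqslant\; \sqrt{2 n^{2/3} \cdot n^{4/3-\beta}} \;=\; \sqrt{2}\, n^{1-\beta/2},
\]
so $\chi(G) \leqslant \sqrt 2\, n^{1-\beta/2} + n^{2/3} < n^{1-\varepsilon}$ for $n$ large enough, letting us report $\chi(G) \leqslant n^\varepsilon$.

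Next, the \emph{negative} case: the algorithm returns $((H, T), \nil)$, witnessing $\chi(H) > T$. Assume for contradiction $\chi(G) \leqslant n^\varepsilon$, and fix a proper coloring $\mathcal Q = (Q_1, \ldots, Q_K)$ of~$G$ with $K \leqslant n^\varepsilon$ parts. Letting $\ell_j$ be the number of edges of $H[Q_j]$ (all in $G \triangle H$, whence $\sum_j \ell_j \leqslant n^{4/3-\beta}$), the same Cauchy--Schwarz trick applied to the $\ell_j$ gives
\[
  \chi(H) \;\leqslant\; \sum_{j=1}^K (\sqrt{2 \ell_j}+1) \;\leqslant\; \sqrt{2 K \sum_j \ell_j} + K \;\leqslant\; \sqrt 2\, n^{2/3-\beta/2+\varepsilon/2} + n^\varepsilon.
\]
With $\varepsilon = \beta/4$ this is strictly less than $n^{2/3}$ for $n$ large enough, contradicting $\chi(H) > T$; hence $\chi(G) \geqslant n^{1-\varepsilon}$, as required.

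The only subtle point will be choosing $\varepsilon$ so that both calibrations work simultaneously---i.e., that $\sqrt 2\, n^{1-\beta/2} + n^{2/3} < n^{1-\varepsilon}$ and $\sqrt 2\, n^{2/3-\beta/2+\varepsilon/2} + n^\varepsilon < n^{2/3}$ both hold for all sufficiently large~$n$. Any $\varepsilon \in (0, \beta/2)$ strictly bounded away from $\beta/2$ would do, and $\varepsilon := \beta/4$ is a convenient clean choice. Everything else is a routine adaptation of Theorem~\ref{thm:hardness-coloring}, with Cauchy--Schwarz playing the role that the linearity of the maximum-degree bound played there.
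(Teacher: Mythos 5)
Your proposal matches the paper's proof essentially step for step: same starting gap problem for chromatic number, same threshold $\lfloor n^{2/3}\rfloor$, same choice $\varepsilon := \beta/4$, same per-color-class application of Observation~\ref{obs:m-to-degen}, and the same Cauchy--Schwarz aggregation in both the positive and the negative branch, with exponents that agree after substitution. The only difference is purely expository, so there is nothing substantive to compare.
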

\begin{proof}
  As in~\cref{thm:hardness-coloring}, we rely on the hardness of distinguishing, for any $\varepsilon \in (0,1/2]$, $n$-vertex graphs $G$ such that $\chi(G) \leqslant n^\varepsilon$ from those such that $\chi(G) \geqslant n^{1-\varepsilon}$.
  Consider inputs $(G,\lfloor n^{2/3} \rfloor)$ of \rel$($\textsc{Coloring}$, \diste, n^{\frac{4}{3}-\beta})$ with $n = |V(G)|$ and $G$ satisfies $\chi(G) \leqslant n^\varepsilon$ or $\chi(G) \geqslant n^{1-\varepsilon}$.
  We show that solving these instances in polynomial time would contradict, unless \P\,$=$\,\NP, the above hardness-of-approximation result with $\varepsilon := \beta/4$.

  Assume that the supposed algorithm returns $((H, \lfloor n^{2/3} \rfloor),\mathcal P)$ such that $\diste(G,H) \leqslant n^{\frac{4}{3}-\beta}$ and $\mathcal P := \{P_1, \ldots, P_t\}$ is a~partition of~$V(H)$ into $t := \lfloor n^{2/3} \rfloor$ independent sets of~$H$.
  Here we need to be more delicate than in the proof of~\cref{thm:hardness-coloring} in upper bounding the chromatic number of~$G$.
  We will rely on~\cref{obs:m-to-degen} and the Cauchy--Schwarz inequality.

  For every $i \in [t]$, let $a_i$ be the number of edges in~$G[P_i]$.
  As $\diste(G,H) \leqslant n^{\frac{4}{3}-\beta}$, \[\sum\limits_{i \in [t]} a_i \leqslant n^{\frac{4}{3}-\beta}.\]
  By~\cref{obs:m-to-degen}, $G[P_i]$ can be properly colored with at~most $\lceil \sqrt{2a_i} \rceil$ colors.
  Hence the chromatic number of~$G$ is at~most
  $$\sum\limits_{i \in [t]} \lceil \sqrt{2a_i} \rceil \leqslant \sum\limits_{i \in [t]} (\sqrt{2a_i}+1) \leqslant t + \sqrt 2 \sum\limits_{i \in [t]} \sqrt{a_i}.$$
  In $\mathbb R^t$, the Cauchy--Schwarz inequality is that
  \[ \left( \sum\limits_{i \in [t]} x_iy_i \right)^2 \leqslant \left( \sum\limits_{i \in [t]} x_i^2 \right) \left( \sum\limits_{i \in [t]} y_i^2 \right).\]
  Setting $x_i := \sqrt{a_i}$ and $y_i := 1$, we get that
  $$\chi(G) \leqslant t + \sqrt 2 \sqrt{\left( \sum\limits_{i \in [t]} \sqrt{a_i}^2 \right) \left( \sum\limits_{i \in [t]} 1^2 \right)}
  =  t + \sqrt {2 t \left( \sum\limits_{i \in [t]} a_i \right)} \leqslant t + \sqrt{2 t \cdot  n^{\frac{4}{3}-\beta}}.$$
  Thus $$\chi(G) \leqslant \lfloor n^{2/3} \rfloor + \sqrt{2} \cdot n^{1/3} \cdot n^{\frac{2}{3}-2\varepsilon} \leqslant \lfloor n^{2/3} \rfloor + \sqrt{2} n^{1-2\varepsilon} < n^{1-\varepsilon},$$
  for large enough~$n$.
  And we can conclude that $\chi(G) \leqslant n^\varepsilon$.

  We now assume that the algorithm returns instead $((H, \lfloor n^{2/3} \rfloor),\nil)$ such that $\diste(G,H) \leqslant n^{\frac{4}{3}-\beta}$ and $\chi(H) > n^{2/3}$.
  We claim that $\chi(G) > n^\varepsilon$.
  Suppose for the sake of contradiction that there is a~partition $\mathcal Q := \{Q_1, \ldots, Q_s\}$ of $V(G)$ into $s := \lfloor n^\varepsilon \rfloor$ independent sets of~$G$.
  For every $i \in [s]$, let $b_i$ be the number of edges in $H[Q_i]$.
  As $\diste(G,H) \leqslant n^{\frac{4}{3}-\beta}$, $\sum_{i \in [s]} b_i \leqslant n^{\frac{4}{3}-\beta}$.
  By~\cref{obs:m-to-degen}, we have $\chi(H) \leqslant \sum_{i \in [s]} \lceil \sqrt{2 b_i} \rceil$.
  A~similar application of the Cauchy--Schwarz inequality with $x_i := \sqrt{b_i}$ and $y_i := 1$ yields
  $$ \chi(H) \leqslant s + \sqrt{2 s \cdot  n^{\frac{4}{3}-\beta}} \leqslant \lfloor n^\varepsilon \rfloor + \sqrt{2} \cdot n^{\varepsilon/2} \cdot n^{\frac{2}{3}-2\varepsilon} = \lfloor n^\varepsilon \rfloor + \sqrt{2} \cdot n^{\frac{2}{3}-\frac{3 \varepsilon}{2}} < n^{2/3},$$
  for large enough~$n$; a~contradiction to~$\chi(H) > n^{2/3}$.
  Hence, we can conclude that $\chi(G) \geqslant n^{1-\varepsilon}$ holds.
\end{proof}

One can extend \cref{thm:hardness-coloring,thm:hardness-coloring-diste} to \textsc{Clique Cover} by~\cref{thm:lpi}.
This finishes the proof of~\cref{thm:1/2-4/3}.

\paragraph*{Acknowledgment.}
The author thanks Bart M. P. Jansen for telling him about certified algorithms.

\end{document}